\newtheorem{lemma}{{Lemma}}
\newtheorem{theorem}{Theorem}
\newtheorem{corollary}{Corollary}
\newtheorem{definition}{Definition}
\begin{document}

\title{Synthesis of Quantum Images Using Phase Rotation}

\author[1]{Shiping Du}
\author[1,]{Daowen Qiu \footnote{ Daowen Qiu is corresponding author.}}
\author[2]{Jozef Gruska}
\author[3]{Paulo Mateus}
\affil[1]{Department of Computer Science, Sun Yat-sen University, Guangzhou 510006, China.}
\affil[2]{Faculty of Informatics, Masaryk University, Brno, Czech Republic.}
\affil[3]{SQIG--Instituto de Telecomunica\c{c}\~{o}es, Departamento de Matem\'{a}tica, Instituto Superior T\'{e}cnico, Av. Rovisco Pais 1049-001, Lisbon, Portugal.
}

\date{}

\maketitle

\begin{abstract}
 A topic about synthesis of quantum images is proposed, and a specific  phase rotation transform constructed is adopted to theoretically realise the synthesis of two quantum images.
 The synthesis strategy of quantum images comprises three steps, which include:
 (1) In the stage of phase extraction, we obtain the phases of the state of the quantum image by transforming the state of the quantum image to prepare the conditions for multiple phases extraction.
 (2) In the stage of rotation operator construction, the phases obtained in the first stage are used to construct the rotation operator where a mechanism is introduced into it to reduce the phase overflow.
 (3) In the stage of application of the rotation operator, we apply the operator constructed in the second stage on the state of quantum image to get a goal state.
 Additionally, numerical analysis gives the joint uncertainty relation of the pixel of the synthesized quantum image.
 The analysis result about the compression ratio indicates that the phase rotation transform and the overflow control mechanism are effective.
%

$\mathbf{Keywords}$ {Quantum image, image synthesis, phase errors, multiple phase estimation}
\end{abstract}

\section{Introduction}
\label{introduction}
Since quantum world tools have been shown to be more powerful than tools of the classical world in many areas, it is natural, and actually very important, to try to explore \cite{a8,a28,a29}
in depth the methods and properties of the quantum image processing.

Quantum images have already been prepared with several mature technologies using different methods in laboratories. For example,  four wave mixing (FWM) \cite{a15} is the widely used one \cite{a16}.
On the other side, in spite of the fact that  quantum images can be easily prepared in physical laboratory, but how to specify the information of a quantum images in a quantum computer \cite{a10,a11} remains a problem.
Recently, some papers have discussed relevant topics, such as \cite{a1,a2,c1,c2}, where the authors propose or summarize different mathematic forms of the state representation of quantum images.
Typically, for example, flexible representation of quantum images (FRQI) \cite{a1,a2} which uses a single qubit to encode the grey level, novel enhanced quantum representation(NEQR) \cite{a17} which improves the expression with two qubits,binary key image generation \cite{b3}, and flexible quantum representation for grey-level quantum images (FQRGI) \cite{a6} et al.
Based on these types of state representation of quantum image,  different operations to the image are explored, such as \cite{c3,b1,b2,b4,b5,b6,c4}. Jiang et al. \cite{c4} creatively propose a  quantum version algorithm based on the improved NEQR \cite{a17} to implement the scaling of the quantum image. Caraiman et al. \cite{c7} realise the image segmentation on a quantum computer.

Image synthesis is an important topic in  area of image processing. One of the most important applications of image synthesis is information hiding \cite{b7,b8,b9,b10}.
Song et al. \cite{a3} proposed an algorithm for information hiding via introducing the quantum image embedding operation while the images are represented as a watermark and a carrier.
Actually, since classical synthesis has widely applications in the reality \cite{c5,c6}, we believe that more quantum image synthesis algorithms to be explored in the future have great significance.

In the present paper, a concrete procedure of image synthesis is described.
Since the image is constituted with pixels, the essence of quantum image synthesis is actually  pixel synthesis.
That's why our paper could analyzes the joint uncertainty of the synthesized pixel in section \ref{joint_uncertainty}.

One basic problem for quantum image synthesis is handling of the brightness of pixels.
Based on the state representation of quantum images and phase rotation operation adopted, the phases are the only available parameters to be chosen for implementing the grey-level control, so the image synthesis is equivalent to the corresponding pixel synthesis of two quantum images, and  pixels synthesis is equivalent to the synthesis of two phases.
 So how to ensure the precision of the phase extracting from the state of the quantum image, and how to effectively restrain the phases  to make them in a specified range are the critical point for us to get a synthesized image successfully. The analysis about effectiveness of overflow control is in section \ref{restriain}.


The rest of the paper is organized as follows.
Section \ref{method} introduces the preliminary knowledge about the state representation of the quantum images.
Section \ref{embed} describes the algorithm of quantum image synthesis.
Section \ref{analysis} analyzes the joint uncertainty of the pixel of the synthesized image, and discuss the effectiveness of phase compression. The last two parts are comparison and conclusion, respectively.

\section{Preliminary}
\label{method}

\subsection{Fundamental theory of quantum image transform}
\label{fundamental_theo}
The state representation of quantum images has to be consistent with images produced by technologies and convenient for the theoretical computation in the future. Concretely, the state representation of quantum images should captures and reflects the following facts.

$\mathbf{Facts}$

 (a) Quantum image can be expressed, using complex amplitudes mathematical expression, in quantum optics.

 (b) Quantum image is composed of pixels \cite{a5}.

 (c) The brightness of each pixel point is associated with the fluctuation of intensity of the photons \cite{a5}.

 (d) Each pixel  is associated with a coordinate.

In the light of the authoritative view of Kolobov \cite{a5},
the brightness of each pixel of a quantum image is determined by the intensity of photons.  Meanwhile, pixel defined by Kolobov \cite{a5} is the integration of photon number within a certain time period at this pixel point. However, to operate on all photons in one pixel point individually and simultaneously is difficult in reality.

Considering to store the pixel information in the phases, and controlling the brightness of each pixel point through modulating the phases of the state of quantum image, we can realise effectively control to the pixels of the quantum image.
Assume that the phases $\theta_i \in (0,\pi)$ are from the phase space $\mathbf{\theta}$, and let $\eta_i $ be from the photon number space $\eta$ representing the number of photons ( $i$ labels the different colors of the quantum image, so the maximal $i$ is the color type defined in the quantum image).
Defining a mapping $\mathbf{f}$ from the variables $\theta_i$ to $\eta_i$
\begin{eqnarray} \label{relation_a}
\mathbf{f}:\theta_i \rightarrow  \eta_i.
 \end{eqnarray}
Obviously, corresponding to $\theta_i$, $\eta_i$ also represents the brightness of the pixel point.
For $\theta_i$, motivated by the $\mathbf{Facts}$ (b) and (c) and Eq. \ref{relation_a},
as long as we know the phase, the number of  photons in one pixel point is also determined.
In addition, it is necessary to assume that the function $\mathbf{f}$  is monotonically increasing.

 Now,  we introduce the following state $|I(\theta)\rangle$ \cite{c8} as the state representation of a quantum image:
\begin{eqnarray} \label{eq_def}
|I(\{\theta_j\})\rangle = \frac{1}{2^{n}}  \sum_{j=0}^{2^{2n}-1}(|0\rangle + e^{i\theta_j}|1\rangle) \bigotimes |j\rangle,
 \end{eqnarray}
where $\{\theta_j\} =  (\theta_0, \theta_2, ..., \theta_{2^{2n}-1}) $, $\theta_j \in (0,\pi/2), j=0,1,..., 2^{2n}-1$. The relative phase information $\theta_j$ in $|0\rangle + e^{i\theta_j}|1\rangle$ encodes the grey level.
$|j\rangle$, $j \in \{ 0,1,..., 2^{2n}-1 \}$, is a $2^{2n}$ dimensional basis state,
and $|j\rangle$ represents the coordinate of $j^{th}$ pixel point in the pixel matrix of a quantum image.
The feature of state of quantum images in  Eq. (\ref{eq_def})  indicates that, the synthesis of  quantum image can be implemented only through  phase rotation. Therefore, to extract  phases from the state of  quantum images is the first and necessary step.

\section{Synthesis of quantum images}
\label{embed}

Given two quantum images,  say  a carrier and embedder (a image to be embedded). The synthesis will lead to the pixel accumulation of the carrier  and the embedder, and give rise to  the brightness changed in the overlapped positions correspondingly.

Since the pixels of the quantum image are represented with phases, so we should obtain all phases of the quantum image before starting to synthesize two quantum images.
In our paper, MPE (for details, see also \cite{a23,a24}) is considered to be an effective method to obtain phases for the given state of quantum images.
Fig. \ref{scheme} shows the procedure of synthesizing two images. There are three steps needed in total (Appendix \ref{shortcoming} lists different cases of phase overflow. This is why we correct the rotation operator in algorithm \ref{algorithm_1}).
The synthesis algorithm of quantum images can be described with algorithm \ref{algorithm_1}, where $|I(\{ \theta_j \})\rangle$ and $|I(\{\varphi_j \})\rangle$ are the states of quantum images defined as Eq. (\ref{eq_def}).

\begin{figure} 
\centering
{\includegraphics[height=3cm,width=7cm]{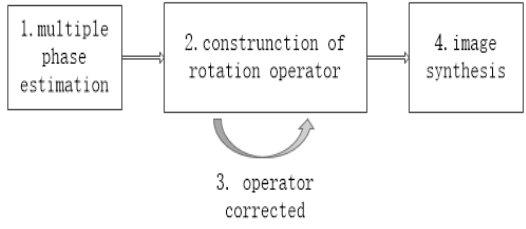}}
\caption{\label{scheme} Scheme of synthesis of two quantum images using phase rotation. There are three steps in the algorithm. That is, multiple phase extraction, construction of rotation operator, and image synthesis. Especially, rotation operator construction should be corrected by embedding  the overflow control mechanism.}
\end{figure}

\begin{algorithm}[ht]\label{algorithm_1}
\caption{Synthesis($|I(\{\theta_j \})\rangle, |I( \{ \varphi_j \})\rangle$)}
\KwIn{ Two states of quantum images to be synthesized: $|I( \{ \theta_j \})\rangle, |I( \{ \varphi_j \})\rangle$ .}  \KwOut{The state of synthesized quantum image .}
Extracting phases from two states $|I( \{ \theta_j \})\rangle, |I( \{ \varphi_j \})\rangle$ with MPE\;
Rotation operator $U'$ constructed and corrected\;
Applying $U'$ on the state $|I( \{ \varphi_j \})\rangle$ \;
\end{algorithm}

We have the following instructions:

Step 1 completes phases extraction. Actually, we should cope with many same quantum states $|I( \{ \theta_j \})\rangle$ and $|I( \{ \varphi_j \})\rangle$, then we could obtain these phases from two different states.

Step 2 constructs rotation operator which is embedded the internal error correction mechanism.

Step 3 completes the synthesis of the state of quantum images. The state $|I( \{ \varphi_j \})\rangle$ used in this step is also same as the backup in the first step.

\subsection{Phase extraction}

Appendix \ref{multiple} (see also \cite{a13}) indicates that the quantum state with  phases to be extracted  must be the following form  (that is,  Eq. (\ref{index_have_to}))
\begin{eqnarray} \label{apply_2}
|I(\{\phi_j\})\rangle = \frac{1}{\sqrt{d}} (|0\rangle +e^{i\phi_1}|1\rangle + ... +e^{i\phi_{d-1}}|d-1\rangle),
 \end{eqnarray}
which is not equivalent to the form of the state representation of quantum images as Eq. (\ref{eq_def}) intuitively.
In order to obtain the phases of the state of the quantum image, we should transform the representation of the quantum image from the form as Eq. (\ref{eq_def}) to a similar form as Eq. (\ref{apply_2}).

To this end, note that Eq. (\ref{eq_def}) could be expressed as the following form,
\begin{eqnarray} \label{deformation_1}
|I( \{ \theta_j \} )\rangle = \frac{1}{2^n} \sum_{j=0}^{2^{2n}-1} |0\rangle|j\rangle + \frac{1}{2^n} \sum_{j=0}^{2^{2n}-1} e^{i\theta_j}|1\rangle|j\rangle.
\end{eqnarray}
Let the states associated with phases in Eq. (\ref{deformation_1})  be a set $A$, then
\begin{align}
A = \{ |1\rangle|0\rangle, |1\rangle|1\rangle, |1\rangle|2\rangle,  ..., |1\rangle|2^{2n}-1\rangle \},
\end{align}
where the left qubit $|1\rangle$ in $A$ is a computational basis.
Given another set $B$,
\begin{align}
B=\{|1\rangle, |2\rangle, |3\rangle, ..., |2^{2n}\rangle\},
\end{align}
where $|1\rangle$ in $B$ is not the computational basis. Encoding $A$ to $B$, we get

\begin{eqnarray} \label{deformation_2}
|I( \{ \theta_j \} )'\rangle = \frac{1}{2^n} \sum_{j=0}^{2^{2n}-1} |0\rangle|j\rangle + \frac{1}{2^n} \sum_{j=1}^{2^{2n}-1} e^{i\theta_j}|j\rangle,
\end{eqnarray}
where,
as far as phase estimation  concerned, Eq. (\ref{deformation_2}) is consistent with the form as Eq. (\ref{apply_2}), so MPE can be applied on Eq. (\ref{deformation_2}) to extract its phases.

\subsection{Rotation operator constructed}
\label{overflow}

Common sense indicates that if phase estimation is reduced, then the error of measurement is reduced also.
Naturally, comparing with the phase error introduced by applying MPE on the two kinds of states of quantum images involved, if we do not estimate the phases of the carrier, namely using phases estimated from embedder only to construct a rotation operator can reduce the error in the image synthesis, but the shortcoming of this method is also obvious.
Actually, since we do not know the phase information of carrier, to control the result of the phase addition is impossible. Therefore the cost of this measure may increase the risk of phase overflow.
Appendix \ref{shortcoming} is the interpretation of such an example.


According to our basic requirements,  phase which reaches or exceeds $\frac{\pi}{2}$ is an exception.
In order to avoid exception,  an unitary transformation $U_n(\theta_j', \varphi_j'), j \in \{0, 1, ..., 2^{2n}-1\}$, is constructed and used to restrain the phases in $(0, \frac{\pi}{2})$, where $\theta_j'$ and $\varphi_j'$ denote the estimated phases of the embedder and the carrier, respectively.

In order to guarantee that the synthesized pixels are in $(0, \frac{\pi}{2})$  as much as possible,
a monotone increasing function $\tanh (x)$ is required,
\begin{eqnarray}
\tanh(x)=\frac{e^{2x} - 1}{e^{2x} + 1} \in (-1,1), \ \frac{\pi}{2} \tanh(x) \in (-\frac{\pi}{2},\frac{\pi}{2}).
\end{eqnarray}
$\tanh(x)$ is monotone increasing when $\tanh(x) \in [0, 1]$. Numerical simmulation shows that when $x \geq 3$, $\tanh(x) \approx 1$.


Taking a two dimensional transform $U_2$ as an example (namely, the image just has one pixel point).
Assume that the state of the embedder and carrier are $|0\rangle +  e^{i\theta}|1\rangle$ and $|0\rangle +  e^{i\varphi}|1\rangle$, respectively.
On the other hand, let the phase estimated from the state $|0\rangle +  e^{i\theta}|1\rangle$   be $\theta'$, and the phase estimated from the state $|0\rangle +  e^{i\varphi}|1\rangle$ be  $\varphi'$.
We define operator $U_2(\theta', \varphi')$ with the following transform,
\begin{eqnarray} \label{mechanism_a}
U_2(\theta', \varphi')(|0\rangle +  e^{i\varphi}|1\rangle) \rightarrow  |0\rangle + e^{i[\frac{\pi}{2} \tanh(g(\theta', \varphi'))+\varphi']}|1\rangle,
\end{eqnarray}
where we define
\begin{eqnarray} \label{further_a}
g(\theta', \varphi') \doteq  \theta' + \varphi',
\end{eqnarray}
then, the synthesized result would be in $(0, \frac{\pi}{2})$ with a high possibility when the error of $\theta'$ and $\varphi'$ as small as possible.
The phase accumulation with Eq. \ref{mechanism_a}  degrading the risk of overflow could be proved in section \ref{restriain}.

In Eq. (\ref{mechanism_a}), $U_2(\theta',  \varphi')$ is unitary, this is because
\begin{eqnarray} \label{subtract}
 U_2(\theta', \varphi') e^{i \varphi}|1\rangle \rightarrow  e^{i[\frac{\pi}{2} \tanh( g(\theta', \varphi')) - \varphi'+\varphi]}|1\rangle,
\end{eqnarray}
and
\begin{align}
U_2(\theta', \varphi') |0\rangle = |0\rangle.
\end{align}
That is,
\begin{eqnarray}
 U_2(\theta', \varphi') |1\rangle \rightarrow  e^{i[\frac{\pi}{2} \tanh(g(\theta', \varphi') ) - \varphi']}|1\rangle,
\end{eqnarray}
we have
\begin{eqnarray}
 U_2(\theta', \varphi') |1\rangle \langle 1| \rightarrow  e^{i[\frac{\pi}{2} \tanh( g(\theta', \varphi') ) - \varphi']}|1\rangle \langle 1|.
\end{eqnarray}
On the other way, since $U(\theta', \varphi') |0\rangle  \rightarrow |0\rangle$,
\begin{eqnarray} \label{index_a}
U_2(\theta',  \varphi') = e^{i[\frac{\pi}{2} \tanh( g(\theta', \varphi') )  - \varphi']}|1\rangle \langle 1| + |0\rangle \langle 0|.
\end{eqnarray}
Thus we have
\begin{eqnarray}
U_2(\theta',  \varphi')U_2(\theta',  \varphi')^{\dagger} = I.
\end{eqnarray}
Therefore, $U_2(\theta',  \varphi')$ is unitary.

The approach above means that we should get the phases of the embedder and the carrier before constructing this operator.
Once the phases of two different images could be estimated with higher precision, the overflow control would be better ( the limit is $0$ error, thus  $\varphi_j-\varphi_j' =0$ in Eq. (\ref{subtract})).
Actually, owing to the Heisenberg limit, the error in phase estimation always exists, so $\varphi_j-\varphi_j' =0$ is impossible.
About the effectiveness of overflow control, see section \ref{restriain}.
For two dimensional case, the element of rotation matrix $U_2(\theta',  \varphi')$ should be
\begin{eqnarray} \label{eq31}
U_2(\theta',  \varphi') =
      \left(
      \begin{array}{cc}
        1              &   0             \\
        0              &   e^{i [\frac{\pi}{2} \tanh( \theta' + \varphi') - \varphi']}
      \end{array}
    \right),
\end{eqnarray}

Eq. (\ref{eq31}) shows that it exists an unitary operator which can be used to restrain the phase overflow. We, now extend the dimension of $U_2(\theta',  \varphi')$ from $2$-dimension to $2^{2n+1}$-dimension to get $U'$, which can be used to transform the state of the quantum image as Eq. (\ref{eq_def}) to get a feasible goal state.
%
Constructing the operator $U'$ should estimate and get all the phases of the embedder and carrier.
Suppose that the estimated phases of two kinds of images ( that is, the carrier and the embedder ), to be synthesized are $\{\varphi_1', \varphi_2', ..., \varphi_{2^{2n}}'\}$ and $\{ \theta_1', \theta_2' ..., \theta_{2^{2n}}'\}$, respectively.
Then, the operator which could be employed to transform $2^{2n}$ dimensional image as Eq. (\ref{eq_def}) is as the following form.

\begin{eqnarray}  \label{unit}
U'=
\left[
\begin{array}{cc cc c cc cc cc c}
1_1         &   0               &   0       &   0               &   \cdots    &      0      &   0              & 0                 &  0                &0          &   0       & 0  \\
0           &   1_2             &   0       &   0               &   \cdots    &      0      &   0              & 0                 &  0                &0          &   0       & 0  \\
\vdots      &   \vdots          &  \vdots   &   \vdots          &   \ddots    &  \vdots     &   \vdots          &  \vdots   &   \vdots           &  \vdots          &  \vdots   &   \vdots\\
0           &   0               &   0       &   0               &   \cdots    & 1_{2^{2n}}  &   0              &  0                &   0               &   0       &   0       & 0  \\
0           &   0               &   0       &   0               &   \cdots    &      0      &   exp_1 &  0                &   0               &   0       &   0       & 0  \\
0           &   0               &   0       &   0               &   \cdots    &      0      &   0              &  exp_2   &   0               &   0       &   0       & 0 \\
0           &   0               &   0       &   0               &   \cdots    &      0      &  0               &   0               &   exp_3               &   0       &   0       & 0     \\
0           &   0               &   0       &   0               &   \cdots    &      0      &  0               &   0               &   0               &   exp_4       &   0       & 0     \\
\vdots      &   \vdots          &  \vdots   &   \vdots          &   \vdots  &   \vdots      &  \vdots   &   \vdots          &  \cdots           &   \cdots  &  \ddots   &  \cdots\\
0           &   0               &   0       &   0               &   \cdots    &      0      &  0               &   0               &   0               &   0       &   0       & exp_{2^{2n}},
\end{array}
\right]
\end{eqnarray}
where
\begin{eqnarray} \label{condition}
exp_1 = e^{i[\frac{\pi}{2} \tanh (\theta_1'+\varphi_1') - \varphi_1']}    \\
exp_2 = e^{i[\frac{\pi}{2} \tanh (\theta_2'+\varphi_2') -\varphi_2']}    \\
...   \nonumber  \\
exp_{2^{2n}} =  e^{i[\frac{\pi}{2} \tanh (\theta_{2^{2n}}'+\varphi_{2^{2n}}')-\varphi_{2^{2n}}']}
\end{eqnarray}

\subsection{State of the synthesized image}
Applied $U'$ on the state as Eq. (\ref{eq_def}), the state of the synthesized quantum image should be
\begin{eqnarray} \label{syn}
|res\rangle = \frac{1}{2^n} \sum_{j=0}^{2^{2n}-1} (|0\rangle +  e^{i[\frac{\pi}{2} \tanh (\theta_j + \varphi_j) + \delta_j]}|1\rangle) \bigotimes |j\rangle.
\end{eqnarray}
where $\delta_j = \varphi_j - \varphi_j'$ is an error item.
$\varphi_j$ is the original true phase at position $j$.
 $\delta_j$ is the difference between the true phase and the estimated phase of the carrier image.

 $\mathbf{Open \ problem}$
One of the critical step in algorithm \ref{algorithm_1} is estimation of phases of two states of quantum images. So MPE is introduced, and the error of the estimated values could not be avoided. This means that some errors in the synthesized result is inevitable. That's to say, the synthesis of quantum image could be distorted to some extent. Theoretically, this problem could be resolved via increasing the number operator to restrain the error (see section \ref{restriain}), after all, the error could not be eliminated.


\section{Analysis of synthesis operation}
\label{analysis}
\subsection{Computation complexity of the preparation of quantum images}
Since the measurement will lead to state collapse, to complete the phase estimation for each kind of image with pixels $2^n \times 2^n$, $O(2^{2n})$  particles \cite{a13} which  take the same phase information are needed.
The state preparation of quantum images as Eq. (\ref{eq_cos}) has already been discussed in \cite{a1},
\begin{eqnarray} \label{eq_cos}
|\psi( \{ \beta_j \} )\rangle  = \frac{1}{2^n} \sum_{j=0}^{2^{2n}-1} (\cos \beta_j |0\rangle +  \sin \beta_j |1\rangle) \bigotimes |j\rangle,
\end{eqnarray}
where $\cos \beta_j |0\rangle +  \sin \beta_j |1\rangle$ ( where $\beta_j \in [0, \frac{\pi}{2}]$, $(j \in  [0,2^{2n}-1] )$) encodes the pixel information at $j^{th}$ position.
 The process of the state preparation shows the following theorem \ref{theorem_1}.

\begin{theorem} \label{theorem_1}
[ Yan, Iliyasu and Jiang, \cite{a1}]Given an angle vector $\theta=(\theta_0,\theta_1,...,\theta_{2^{2n}-1})$, there is an unitary transform $\mathcal{P}$ that can be implemented by a quantum circuit with  polynomial number of Hadamard gates to transform the input state $|0\rangle^{2n+1}$ to a FRQI state Eq. (\ref{eq_cos}).
\end{theorem}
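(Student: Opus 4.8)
The plan is to exhibit $\mathcal{P}$ explicitly as a short two-stage circuit and then verify, separately, its action on the fiducial state, its unitarity, and its gate count. Throughout I would split the $2n+1$ qubits into one ``color'' qubit and a $2n$-qubit ``position'' register, writing the target circuit as $\mathcal{P}=\mathcal{R}\,(I\otimes H^{\otimes 2n})$.

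First I would treat the superposition stage. Applying $H^{\otimes 2n}$ to the position register of $|0\rangle^{\otimes(2n+1)}=|0\rangle\otimes|0\rangle^{\otimes 2n}$ yields
\[
|0\rangle\otimes\frac{1}{2^{n}}\sum_{j=0}^{2^{2n}-1}|j\rangle,
\]
using exactly $2n$ Hadamard gates; this is the only place Hadamard gates are needed, and the count is already linear in $n$. Next I would build the rotation stage $\mathcal{R}=\prod_{j=0}^{2^{2n}-1}R_j$, where each factor is a position-controlled single-qubit $Y$-rotation,
\[
R_j=R_y(2\beta_j)\otimes|j\rangle\langle j|+I\otimes\!\!\sum_{k\neq j}|k\rangle\langle k|,\qquad
R_y(2\beta_j)=\begin{pmatrix}\cos\beta_j & -\sin\beta_j\\ \sin\beta_j & \cos\beta_j\end{pmatrix}.
\]
Because $R_j$ applies $R_y(2\beta_j)$ to the color qubit exactly when the position register equals $|j\rangle$ and the identity otherwise, applying $\mathcal{R}$ to the state above sends each branch $|0\rangle\otimes|j\rangle$ to $(\cos\beta_j|0\rangle+\sin\beta_j|1\rangle)\otimes|j\rangle$, which is precisely the FRQI state of Eq. (\ref{eq_cos}). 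Unitarity of $\mathcal{P}$ is then immediate: it is the product of the unitary $H^{\otimes 2n}$ with the factors $R_j$, each acting nontrivially only on the subspace whose position register is $|j\rangle$, so these factors commute and their product is again unitary.

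The main obstacle --- indeed the only content beyond bookkeeping --- is the gate-count claim, which turns entirely on how ``polynomial'' is read. Here it must mean polynomial in the number of pixels $N=2^{2n}$, not in $n$: the vector $\theta$ already specifies $2^{2n}$ independent angles, and since each elementary gate carries only $O(1)$ continuous parameters, no circuit can realise a generic such vector with $\mathrm{poly}(n)$ gates. With that reading I would bound the cost by decomposing each $R_j$ as a $2n$-fold-controlled $R_y$ rotation (conjugating by $X$ gates on the controls whose corresponding bit of $j$ is $0$), which expands into $\mathrm{poly}(n)$ elementary gates by the standard multiply-controlled-rotation construction; summing over the $N=2^{2n}$ values of $j$ gives $O(2^{2n}\,\mathrm{poly}(n))$ gates, polynomial in $N$, on top of the $2n$ Hadamard gates of the first stage.

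Finally I would remark that the statement's reference to ``Hadamard gates'' is best read as ``elementary gates (including Hadamard gates),'' since the rotation stage unavoidably contributes non-Hadamard gates; the genuinely Hadamard part of the circuit is only the $2n$-gate superposition layer, whereas the grey-level encoding is carried by the controlled $Y$-rotations. I expect the verification of the action and of unitarity to be entirely routine, and I anticipate that essentially all of the care in a careful writeup goes into fixing the complexity convention and accounting honestly for the multiply-controlled-rotation decomposition.
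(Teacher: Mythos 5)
Your construction coincides with the paper's own proof (Appendix~\ref{theorem1_proof}): a layer of Hadamard gates on the position register to create $\frac{1}{2^n}|0\rangle\otimes\sum_{j}|j\rangle$, followed by the product of position-controlled rotations $R_j = R_y(2\theta_j)\otimes|j\rangle\langle j| + I\otimes\sum_{k\neq j}|k\rangle\langle k|$, with the action verified branchwise and unitarity factor by factor, so the proposal is correct and takes essentially the same route. Your closing caveat---that the circuit contains $2^{2n}$ controlled rotations, so ``polynomial'' can only sensibly mean polynomial in the pixel count $2^{2n}$ and the non-Hadamard gates dominate---is a more honest accounting than the paper's, which literally counts only the $O(n)$ Hadamard gates and leaves the multiply-controlled-rotation cost unstated.
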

 Note, the proof of the theorem \ref{theorem_1} confers Appendix \ref{theorem1_proof}, or see also \cite{a1}.
To describe the state of a quantum image, even the essence of representing the quantum image are the same, but the form of Eq. (\ref{eq_def}) \cite{a7} (representing pixels with angles) is different from Eq. (\ref{eq_cos}) (representing pixels with phases).

\begin{theorem} \label{theorem_2}
[Song and Niu, \cite{a7}] Given a quantum image $|I( \{ \theta_j \} )\rangle = \\
 \frac{1}{2^{n}}  \sum_{j=0}^{2^{2n}-1}(|0\rangle + e^{i\theta_j}|1\rangle) \bigotimes |j\rangle$,
there is a $2n+1$ qubits unitary transform $C$  that transforms a quantum image $|I(\theta)\rangle$ to the quantum image $|I( \{ \psi_j \} )\rangle = \frac{1}{2^{n}}  \sum_{j=0}^{2^{2n}-1}(|0\rangle + e^{i\psi_j}|1\rangle) \bigotimes |j\rangle$.
\end{theorem}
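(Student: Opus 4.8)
The plan is to produce $C$ \emph{explicitly} as a diagonal, controlled-phase unitary acting on the single brightness qubit conditioned on the $2n$-qubit coordinate register, and then to verify by a direct term-by-term computation that it sends $|I(\{\theta_j\})\rangle$ to $|I(\{\psi_j\})\rangle$. The motivation is immediate: in each coordinate sector the source and target states differ only by the relative phase carried on the $|1\rangle$ branch, $e^{i\theta_j}$ against $e^{i\psi_j}$, so the natural operator is the one that fixes $|0\rangle$ and multiplies $|1\rangle$ by the correction $e^{i(\psi_j-\theta_j)}$ when the coordinate register reads $|j\rangle$. This has exactly the diagonal controlled-phase form already used for $U_2$ in Eq. (\ref{eq31}) and for $U'$ in Eq. (\ref{unit}).

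Concretely, I would set
\begin{eqnarray} \label{C_def}
C \doteq \sum_{j=0}^{2^{2n}-1} \bigl( |0\rangle\langle 0| + e^{i(\psi_j - \theta_j)}|1\rangle\langle 1| \bigr) \bigotimes |j\rangle\langle j|,
\end{eqnarray}
an operator on $1+2n = 2n+1$ qubits, as the statement demands. The first check is unitarity. The operator is block diagonal with respect to the orthogonal coordinate basis $\{|j\rangle\}$, and within each block it is the $2\times2$ diagonal matrix with entries $1$ and $e^{i(\psi_j-\theta_j)}$, both of unit modulus; hence every block is unitary and therefore so is $C$. Equivalently, $C$ is itself a size-$2^{2n+1}$ diagonal matrix whose diagonal entries all have modulus one, so $C C^{\dagger} = I$ follows at once.

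The second step is to verify the action. Applying $C$ to the $j$-th summand of Eq. (\ref{eq_def}) and using orthogonality of the coordinate states to select the matching block gives $\bigl(|0\rangle\langle 0| + e^{i(\psi_j-\theta_j)}|1\rangle\langle 1|\bigr)\bigl(|0\rangle + e^{i\theta_j}|1\rangle\bigr) = |0\rangle + e^{i(\psi_j-\theta_j)}e^{i\theta_j}|1\rangle = |0\rangle + e^{i\psi_j}|1\rangle$. Reassembling the sum with the common normalisation $1/2^n$ reproduces exactly $|I(\{\psi_j\})\rangle$, which proves the claim.

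The content here is mild, and I do not expect a genuine obstacle. The only points that merit care are that a single operator discharges all $2^{2n}$ coordinates simultaneously --- which it does precisely because each phase correction is conditioned on $|j\rangle$ and distinct coordinate blocks never interact --- and that $C$ is a bona fide phase-rotation unitary rather than some arbitrary unitary that merely happens to map one state vector to the other; it is this structural fact, not the bare existence guaranteed by linear algebra, that justifies the phase-rotation synthesis used throughout the paper. Should one additionally want $C$ realised by an explicit quantum circuit, the remaining (and only nontrivial) work would be to decompose this diagonal unitary into elementary controlled-phase gates, but the statement asks only for existence of the $2n+1$-qubit unitary.
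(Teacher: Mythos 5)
Your proof is correct and is essentially the approach the paper intends: the paper omits the proof of this theorem, deferring to the argument for Theorems \ref{theorem_1} and \ref{theorem_3}, which builds the unitary as a product of per-pixel controlled rotations $R_j = \bigl(I \bigotimes \sum_{i \neq j} |i\rangle\langle i|\bigr) + R_z(\psi_j - \theta_j) \bigotimes |j\rangle\langle j|$, and your closed-form diagonal operator $C$ is exactly that product $\prod_j R_j$ written in one step (with the Hadamard preparation stage of Theorem \ref{theorem_3} correctly dropped, since here the input is an already-prepared image state rather than $|0\rangle^{\bigotimes 2n+1}$). Your unitarity check and term-by-term verification of the action are sound, so nothing is missing.
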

Note, the proof of theorem \ref{theorem_2} is similar to the proof of theorem \ref{theorem_1}, so we omit its proof. In theorem \ref{theorem_2} ( see also \cite{a7}), Song et al. claim that  polynomial qubits are needed when we decompose the relation between $|I(\theta)\rangle$ and $|I(\psi)\rangle$.
Comparing with the proof procedure between theorem \ref{theorem_1} and \ref{theorem_2}, we give a complete procedure of preparing the state of quantum image from the state $|0\rangle^{\bigotimes 2n+1}$ in the following theorem \ref{theorem_3}.

\begin{theorem} \label{theorem_3}
For any initial input state $|0\rangle ^{2n+1}$, there exists an unitary operator $P$ to transform the initial input state to the state $|I( \{ \theta_j \} )\rangle = \frac{1}{2^{n}}  \sum_{j=0}^{2^{2n}-1}(|0\rangle + e^{i\theta_j}|1\rangle) \bigotimes |j\rangle$, and only a polynomial number of Hadamard gates are needed to complete this transformation.
\end{theorem}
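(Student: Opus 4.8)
The plan is to construct $P$ explicitly as a composition of two layers acting on a register of $2n+1$ qubits, split as one ``color'' qubit plus $2n$ ``position'' qubits: first a superposition-creating layer built entirely from Hadamard gates, then a phase-imprinting layer built from controlled phase rotations. This mirrors the structure of the FRQI preparation in Theorem \ref{theorem_1}, where the Hadamards generate the uniform superposition and the pixel data is written in afterwards by controlled rotations; the only change is that here the data is written into the \emph{relative phase} $e^{i\theta_j}$ rather than into an $R_y$-type angle.

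First I would act on $|0\rangle^{2n+1}$ with $H^{\otimes(2n+1)}$. The $2n$ Hadamards on the position qubits produce the uniform superposition $\frac{1}{2^{n}}\sum_{j=0}^{2^{2n}-1}|j\rangle$, and the single Hadamard on the color qubit produces $\frac{1}{\sqrt{2}}(|0\rangle+|1\rangle)$, so the state becomes
\begin{eqnarray}
\frac{1}{\sqrt{2}\,2^{n}}\sum_{j=0}^{2^{2n}-1}(|0\rangle+|1\rangle)\bigotimes|j\rangle.
\end{eqnarray}
Next I would define the diagonal phase operator
\begin{eqnarray}
R=|0\rangle\langle0|\otimes I+|1\rangle\langle1|\otimes\sum_{j=0}^{2^{2n}-1}e^{i\theta_j}|j\rangle\langle j|,
\end{eqnarray}
which leaves the $|0\rangle$ branch untouched and multiplies the $|1\rangle$ branch by $e^{i\theta_j}$ conditioned on the position being $|j\rangle$. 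Applying $R$ sends the state to $\frac{1}{\sqrt{2}\,2^{n}}\sum_j(|0\rangle+e^{i\theta_j}|1\rangle)\otimes|j\rangle$, which is exactly the target $|I(\{\theta_j\})\rangle$ of Eq. (\ref{eq_def}) up to the overall normalization convention used there. Setting $P=R\,H^{\otimes(2n+1)}$ then gives the required operator.

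For correctness and unitarity I would check each layer separately. The Hadamard layer is a tensor product of unitaries, hence unitary. The operator $R$ is diagonal in the computational basis with entries of unit modulus, so $RR^{\dagger}=I$; this is the many-qubit analogue of the single-pixel operator $U_2$ shown to be unitary in Eq. (\ref{index_a}). To make the construction genuinely elementary I would decompose $R$ into one controlled phase gate per position $j$, each realising $e^{i\theta_j}$ on the $|1\rangle$ component controlled on the $2n$-qubit pattern $|j\rangle$, exactly paralleling the controlled-rotation decomposition used in the proof of Theorem \ref{theorem_1}. The decisive point is that \emph{no Hadamard gates appear in this second layer}, so the Hadamard count of the whole circuit is exactly $2n+1=O(n)$, which establishes the ``polynomial number of Hadamard gates'' claim.

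The main obstacle is the phase-imprinting layer rather than the Hadamard layer. Addressing all $2^{2n}$ positions makes the number of controlled phase gates exponential in $n$, so the theorem is only meaningful under the same gate-counting convention as Theorem \ref{theorem_1}, namely that one counts Hadamard gates specifically; I would state this convention explicitly and argue that $R$ is assembled solely from controlled phase primitives, introducing no further Hadamards. The remaining routine point is the normalization bookkeeping, reconciling the $\frac{1}{\sqrt{2}\,2^{n}}$ prefactor produced by the Hadamards with the $\frac{1}{2^{n}}$ written in Eq. (\ref{eq_def}). I would favour this self-contained construction over chaining the earlier results, since Theorem \ref{theorem_2} is a phase-to-phase transform and therefore cannot by itself bridge the angle encoding of the FRQI state in Theorem \ref{theorem_1} to the phase encoding required here.
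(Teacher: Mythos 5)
Your proposal is correct and follows essentially the same route as the paper's own proof: a Hadamard layer $H^{\otimes(2n+1)}$ on $|0\rangle^{\otimes 2n+1}$ followed by per-pixel controlled phase rotations, where your diagonal operator $R=|0\rangle\langle0|\otimes I+|1\rangle\langle1|\otimes\sum_{j}e^{i\theta_j}|j\rangle\langle j|$ is exactly the product $\prod_{k}R_k$ of the paper's operators $R_k=(I\otimes\sum_{j\neq k}|j\rangle\langle j|)+R_z(\theta_k)\otimes|k\rangle\langle k|$, and the Hadamard count $2n+1=O(n)$ matches the paper's conclusion. Your explicit caveats (the exponential number of controlled phase gates, and the normalization bookkeeping against Eq. (\ref{eq_def})) are accurate observations that apply equally to the paper's argument but do not change the claim as stated.
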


\begin{proof}
 This theorem can be proven with the same way as theorem \ref{theorem_1} and theorem \ref{theorem_2}.
 In order to obtain the goal state $I(\theta_j)\rangle$, we assume that the initial state of the system is $|0\rangle ^{\bigotimes 2n+1} = |0\rangle \bigotimes |0\rangle ^{\bigotimes 2n}$.
 This task can be realized in the following two steps.

\emph{Step \ 1.} Applying Hadamard gate to the initial state, we then get
\begin{eqnarray}  \label{evolve_39}
|\psi\rangle &= (H \bigotimes H^{\bigotimes 2n}) (|0\rangle \bigotimes |0\rangle ^{\bigotimes 2n})  \nonumber \\
&=\frac{1}{2^{n+1}} (|0\rangle + |1\rangle) \bigotimes  \sum_{j=0}^{2^{2n}-1} |j\rangle.
\end{eqnarray}

\emph{Step \ 2.} Constructing and applying a rotation operator $R_z(\theta_k)$,
\begin{eqnarray}
   R_z(\theta_k) &=
      \left(
      \begin{array}{cc}
        1              &   0             \\
        0              &   e^{i \theta_k}
      \end{array}
    \right),
\end{eqnarray}
to rotate the phase at the $k^{th}$  pixel location of the quantum image, we get the following operator $R_k$.

\begin{eqnarray}
R_k=(I \bigotimes  \sum_{j=0,j \neq k}^{2^{2n}-1}  |j\rangle \langle j|) + R_z( \theta_k) \bigotimes |k\rangle \langle k|.
\end{eqnarray}

Notice that $R_k R_k^{\dagger} = I$ and therefore $R_k$ is an unitary operator.
Applying $R_k$ on $|\psi\rangle$, we get
\begin{eqnarray} \label{ind_1}
&R_k    \frac{1}{2^{(n+1)}}   [    (|0\rangle + |1\rangle ) \bigotimes  \sum_{j=0}^{2^{2n}-1} |j\rangle   ]    \\
&= \frac{1}{2^{(n+1)}}  [ (I \bigotimes  \sum_{j=0,j \neq k}^{2^{2n}-1}  |j\rangle)  + ( |0\rangle + e^{i\theta_j} |1\rangle) \bigotimes |j\rangle  ].
\end{eqnarray}

Assume that we have an operator $R_p$, which is similar to $R_k$. $R_p$ is applied to the result of last step, we have
\begin{eqnarray} \label{ind_2}
&R_p R_k   \frac{1}{2^{(n+1)}}  [    (|0\rangle + |1\rangle) \bigotimes  \sum_{j=0}^{2^{2n}-1} |j\rangle   ]  \\
&= \frac{1}{2^{(n+1)}}  [ (I \bigotimes  \sum_{j=0,j \neq k,p}^{2^{2n}-1}  |j\rangle)  + (|0\rangle + e^{i\theta_j}|1\rangle) \bigotimes |j\rangle    \nonumber  \\
& + ( |0\rangle + e^{i\theta_p} |1\rangle) \bigotimes |p\rangle.
\end{eqnarray}

It is obvious that we can design our goal state $|I( \{ \theta_j \} )\rangle$  using the above operators repeatedly. That's to say, the operator $P$ could be constructed as,
\begin{eqnarray}
P =   (\prod_{i=1}^{2^{2n}}  R_i).
\end{eqnarray}
 Since all $R_i$ are unitary, $ \prod_{i=1}^{2^{2n}} R_i$ is also unitary , and  $P = \prod_{i=1}^{2^{2n}} R_i $ is unitary.
 By induction of  Eq. (\ref{ind_1}) and Eq. (\ref{ind_2}), we have
 \begin{align}
|I( \{ \theta_j \} )\rangle = P|\psi\rangle.
 \end{align}
 The number of Hadamard gates used in the process of preparing  the state of the quantum image as Eq. (2) is $O(n)$. In summary, the claim holds.
\end{proof}

\subsection{Uncertainty relation of the synthesized pixel}
\label{joint_uncertainty}
Since covariance measurement is used to estimate the multiple phases of the state of quantum image (see also, \cite{a13}), Heisenberg limit is the reason of inevitable phase error. In this part, we explore the uncertainty relations of a single pixel of the synthesized image.

Actually, Holevo's theoretical analysis \cite{a9} shows that covariant measurement has an uncertainty relation between the number operator and phases.
Consider the complex random variable $e^{i\varphi}$ taking values on the unit circle $(-\pi, \pi)$. The variance then is
\begin{eqnarray}
 D\{ e^{i\varphi} \} = \int_{-\pi}^{\pi} | e^{i\varphi} - E\{e^{i\varphi}\}|^2 P(d\varphi)
\end{eqnarray}
 where $E\{ e^{i\varphi}\} = \int e^{i\varphi} P(d\varphi)$. Then the value of the uncertainty of $\varphi$ is formulated as
\begin{eqnarray} \label{variance}
 \Delta \{ \varphi\}^2 = \frac{D \{ e^{i\varphi} \}}{ |E\{ e^{ i\varphi }\}|^2 }.
\end{eqnarray}

 Let $\mathcal{H}$ be an infinite dimensional Hilbert space and $\{ |n\rangle;n=0,1,..., \}$ is a basis, and let $N$ be the number operator,
\begin{eqnarray} \label{def_N}
 N = \sum_{n=0}^{\infty} n|n\rangle\langle n|.
\end{eqnarray}
\begin{eqnarray} \label{def_Delta}
\Delta N = || (N- \bar{N})|\varphi\rangle||^2, \bar{N} = \langle \varphi| N |\varphi\rangle
\end{eqnarray}
(Generally, constant is also an operator, if we multiply it with an identity operator).
Then we have the following lemma.

\begin{lemma} \label{lemma_0}
[Holevo, \cite{a9}]For any covariant measurement $M$
\begin{eqnarray} \label{uncertain_1}
\Delta_M\{\varphi\}^2 \geq (1- \frac{1}{2}|\langle \varphi | 0\rangle|^2)^{-1} (\frac{1}{4(\Delta N)^2} + \frac{1}{2}|\langle \varphi|0\rangle|^2),
\end{eqnarray}
the following uncertainty relation holds
\begin{eqnarray} \label{uncertain_2}
\Delta_M\{\varphi\} \cdot \Delta N \geq \frac{1}{2}.
\end{eqnarray}
\end{lemma}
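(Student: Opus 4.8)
The plan is to reduce the whole statement to a sharp upper bound on $|E\{e^{i\varphi}\}|$. Because $e^{i\varphi}$ lies on the unit circle, $E\{|e^{i\varphi}|^2\}=1$, so $D\{e^{i\varphi}\}=1-|E\{e^{i\varphi}\}|^2$ and hence, from the definition in Eq.~(\ref{variance}),
\[
\Delta_M\{\varphi\}^2=\frac{1}{|E\{e^{i\varphi}\}|^2}-1 .
\]
Thus a lower bound on $\Delta_M\{\varphi\}^2$ is exactly an upper bound on $|E\{e^{i\varphi}\}|^2$, and both displayed inequalities of the lemma will follow once such a bound is produced in the stated form.

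First I would invoke Holevo's structure theorem for measurements covariant under the phase shift $e^{i\phi N}$: every such $M$ has a density $M(\varphi)=\frac{1}{2\pi}\sum_{m,n\ge0}e^{i(n-m)\varphi}h_{mn}|m\rangle\langle n|$, where the matrix $(h_{mn})$ is positive semidefinite with $h_{nn}=1$ (normalisation $\int M(\varphi)\,d\varphi=I$). Writing the measured state as $|\varphi\rangle=\sum_n c_n|n\rangle$ and integrating, only the $m=n+1$ terms survive, giving $E\{e^{i\varphi}\}=\sum_{n\ge0}c_n\bar c_{n+1}h_{n+1,n}$. Positivity of the $2\times2$ principal minors forces $|h_{n+1,n}|\le1$, so $|E\{e^{i\varphi}\}|\le\sum_n|c_n||c_{n+1}|=\langle\tilde\varphi|E_-|\tilde\varphi\rangle$, where $E_-=\sum_{n\ge0}|n\rangle\langle n+1|$ is the Susskind--Glogower lowering operator and $|\tilde\varphi\rangle=\sum_n|c_n||n\rangle$ has the same number statistics and the same $|\langle 0|\tilde\varphi\rangle|^2=|\langle\varphi|0\rangle|^2$. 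This reduces everything to the canonical measurement and a state with nonnegative amplitudes.

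The core step is then a commutator uncertainty argument for $E_-$. Introduce the Hermitian cosine and sine operators $C=\tfrac12(E_-+E_-^\dagger)$ and $S=\tfrac{1}{2i}(E_--E_-^\dagger)$, so that $\langle E_-\rangle=\langle C\rangle+i\langle S\rangle$. Using $E_-E_-^\dagger=I$ and $E_-^\dagger E_-=I-|0\rangle\langle0|$ one obtains the key operator identity $C^2+S^2=I-\tfrac12|0\rangle\langle0|$, whence $\langle C^2+S^2\rangle=1-\tfrac12|\langle\varphi|0\rangle|^2$; this is the source of the $\tfrac12|\langle\varphi|0\rangle|^2$ appearing in the lemma. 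Since $[N,E_-]=-E_-$ yields $[N,C]=-iS$ and $[N,S]=iC$, the Robertson--Schr\"odinger relation gives $(\Delta N)^2(\Delta C)^2\ge\tfrac14\langle S\rangle^2$ and $(\Delta N)^2(\Delta S)^2\ge\tfrac14\langle C\rangle^2$. Adding these and using $(\Delta C)^2+(\Delta S)^2=\langle C^2+S^2\rangle-|E\{e^{i\varphi}\}|^2$ produces
\[
(\Delta N)^2\Big(1-\tfrac12|\langle\varphi|0\rangle|^2-|E\{e^{i\varphi}\}|^2\Big)\ge\tfrac14|E\{e^{i\varphi}\}|^2 .
\]
Solving this linear inequality for $|E\{e^{i\varphi}\}|^2$ and inserting the result into the displayed expression for $\Delta_M\{\varphi\}^2$ yields exactly Eq.~(\ref{uncertain_1}); dropping the nonnegative term $\tfrac12|\langle\varphi|0\rangle|^2$ and using $(1-\tfrac12|\langle\varphi|0\rangle|^2)^{-1}\ge1$ then leaves $\Delta_M\{\varphi\}^2\ge\frac{1}{4(\Delta N)^2}$, i.e.\ Eq.~(\ref{uncertain_2}).

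The hardest part is not the algebra but justifying the two structural reductions rigorously. The covariance structure theorem that parametrises $M$ by a positive matrix $(h_{mn})$ needs the representation theory of the phase group together with care about the semibounded spectrum of $N$; and the boundary term $|0\rangle\langle0|$ in $E_-^\dagger E_-$ --- the very feature distinguishing the number--phase pair from canonical $x$--$p$ --- must be tracked exactly, since it alone produces the $|\langle\varphi|0\rangle|^2$ corrections and prevents the naive Heisenberg identity from holding with equality. Handling the non-Hermitian $E_-$ through its Hermitian parts $C,S$ is what keeps these boundary contributions under control.
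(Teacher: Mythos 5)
Your proposal is correct and takes essentially the same route as the paper's (Holevo's) proof: the same cosine--sine operators formed from the first moment of the measurement, the same boundary identity $C^2+S^2=I-\tfrac{1}{2}|0\rangle\langle 0|$ arising from $E_-E_-^\dagger=I$, $E_-^\dagger E_-=I-|0\rangle\langle 0|$, and the same commutators with $N$ fed into Robertson uncertainty relations. Your two departures are refinements rather than a different method, and in fact tighten the paper's sketch: you handle the quantifier ``for any covariant $M$'' explicitly by the structure theorem and $|h_{n+1,n}|\le 1$ (reducing to the canonical Susskind--Glogower measurement and a nonnegative-amplitude state with the same number statistics), where the paper simply works with the optimal $M_\star$ and analogizes from the angular-momentum case; and you obtain inequality (\ref{uncertain_1}) in closed form by solving the linear inequality for $|E\{e^{i\varphi}\}|^2$ via $\Delta_M\{\varphi\}^2=|E\{e^{i\varphi}\}|^{-2}-1$, where the paper asserts the final bound.
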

where $\Delta_M\{\varphi\}$ is same as the variance of Eq. (\ref{variance}), and $|\varphi\rangle = \sum_n \varphi_n|n\rangle$.

The detailed procedure of proving lemma \ref{lemma_0} confers Appendix \ref{lemma_proof} (see also \cite{a9}).
The uncertainty relation inequality (\ref{uncertain_2}) obviously holds, because $(1- \frac{1}{2}|\langle \varphi | 0\rangle|^2)^{-1} >1$, meanwhile $\frac{1}{2}|\langle \varphi|0\rangle|^2 > 0$.
 Inequality (\ref{uncertain_2}) is a general inequality relation about two objects $\Delta_M\{\varphi\}$ and $\Delta N$.

Since the phase obtained from the state of the quantum image complies with the uncertainty relation as inequality (\ref{uncertain_2}), the pixel of the synthesized image has some implied uncertainty relations.
Note that, the embedder ( whose true phase is represented with $\theta_j$) is embedded into the carrier( whose true phase is represented with $\varphi_j$).

%
So for the pixel of the synthesised image,
we have a  general uncertainty relation described as theorem \ref{theorem_4}.

\begin{theorem}  \label{theorem_4}
For synthesis operation of quantum images, given two  quantum images $|I( \{ \varphi_{j} \} )\rangle$ (carrier) and $|I( \{ \theta_{j} \} )\rangle$ (embedder), which are represented as the form as Eq. (\ref{eq_def}). $\varphi_j$ and $\theta_j$ are the phases of the state of two images $|I( \{ \varphi_{j} \} )\rangle$ and $|I( \{ \theta_{j} \} )\rangle$ at $j^{th}$ position, respectively.
Suppose that the number operator is $N=\sum_{n=0}^{\infty} n |n\rangle\langle n|$, and $\{|n\rangle, n=0, 1, ...\}$ is a basis of Hilbert space.
Measured $|I( \{ \varphi_{j} \} )\rangle$ and $|I( \{ \theta_{j} \} )\rangle$ to get the estimated phases $\varphi_j'$ and $\theta_j'$ corresponding to their true phase $\varphi_j$ and $\theta_j$ individually and respectively, then the lower bound of the joint uncertainty of the pixel of the synthesized image is $ \tanh(1) + \frac{1}{2}$.
\end{theorem}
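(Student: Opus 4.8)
The plan is to treat the pixel information as being carried entirely by the relative phase of the synthesized state, so that the ``uncertainty of the pixel'' is the uncertainty of the synthesized phase $\Phi_j = \frac{\pi}{2}\tanh(\theta_j + \varphi_j) + \delta_j$ read off from Eq.~(\ref{syn}), with $\delta_j = \varphi_j - \varphi_j'$. First I would observe that this phase splits into two structurally distinct contributions: a deterministic synthesis term $\frac{\pi}{2}\tanh(\theta_j + \varphi_j)$ assembled from the two estimated phases, and the residual carrier-estimation error $\delta_j$. Because $\Phi_j$ is genuinely \emph{joint} in the embedder phase $\theta_j$ and the carrier phase $\varphi_j$, the total uncertainty of the pixel must be built from the uncertainties of both estimations, each produced by a covariant (MPE) measurement and hence governed by Lemma~\ref{lemma_0}.

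Second, I would invoke Holevo's relation, inequality~(\ref{uncertain_2}), separately for the carrier and the embedder: since each phase is extracted by a covariant measurement, $\Delta_M\{\varphi\}\cdot\Delta N \ge \tfrac12$ and $\Delta_M\{\theta\}\cdot\Delta N \ge \tfrac12$. The error term $\delta_j$ is exactly the carrier estimation discrepancy $\varphi_j-\varphi_j'$, so its contribution to the joint uncertainty is controlled directly by the carrier relation and supplies the additive $\tfrac12$ in the claimed bound.

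Third --- and this is the delicate part --- I would propagate the remaining uncertainty through the nonlinear synthesis map $\frac{\pi}{2}\tanh(\cdot)$. Using the monotonicity of $\tanh$ recorded just after its defining equation, together with the admissible range $\theta_j,\varphi_j\in(0,\pi/2)$ coming from Eq.~(\ref{eq_def}), I would bound the sensitivity of $\frac{\pi}{2}\tanh(\theta_j+\varphi_j)$ to the combined phase error and combine it with the two Holevo bounds above. The target value $\tanh(1)$ should then emerge by evaluating the resulting bound at the extremal (or representative) value of the argument, equivalently by normalising $\Delta N$ so that the combined phase spread equals $1$; adding this to the $\tfrac12$ contributed by the carrier error yields the stated lower bound $\tanh(1)+\tfrac12$.

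The main obstacle I anticipate lies precisely in this third step: rigorously transporting the Holevo product bounds through the nonlinear $\tanh$ and justifying why the relevant argument is $1$ rather than some other point of the range $(0,\pi)$. A naive linearised (error-propagation) estimate produces a $\operatorname{sech}^2$ prefactor rather than $\tanh$ itself, so the clean value $\tanh(1)$ has to come from an extremal analysis over the admissible phases combined with a canonical normalisation of $\Delta N$. Making that choice principled, and verifying that the $\tanh$ contribution and the carrier-error contribution genuinely add rather than interact, is where the real work of the proof concentrates.
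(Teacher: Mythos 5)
You have correctly identified the obstacle, but your proposal does not overcome it, and the route you chose cannot reach the stated bound. The paper never propagates statistical uncertainty through the nonlinear map $\frac{\pi}{2}\tanh(\cdot)$. Instead, after establishing the two independent Holevo relations $\Delta_M\{\varphi\}\cdot\Delta N_1\ge\frac{1}{2}$ and $\Delta_M\{\theta\}\cdot\Delta N_2\ge\frac{1}{2}$ (your second step, which matches the paper), it \emph{defines} the joint uncertainty of the synthesized pixel by structural analogy with the phase formula (\ref{un_3}): the two estimated phases are replaced by the corresponding uncertainty products and the error term $\delta_j$ by the carrier product, giving the composite quantity $\tanh(\Delta_M\{\varphi\}\cdot\Delta N_1+\Delta_M\{\theta\}\cdot\Delta N_2)+\Delta_M\{\varphi\}\cdot\Delta N_1$. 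Summing the two Holevo bounds gives $\Delta_M\{\varphi\}\cdot\Delta N_1+\Delta_M\{\theta\}\cdot\Delta N_2\ge 1$ (inequality (\ref{un_4})), and monotonicity of $\tanh$ then yields inequality (\ref{916_taifeng}) in one line, with lower bound $\tanh(1)+\frac{1}{2}$. In particular, the argument $1$ of $\tanh$ is the sum $\frac{1}{2}+\frac{1}{2}$ of the two Holevo lower bounds on the \emph{uncertainty products}; it has nothing to do with an extremal phase value in $(0,\pi)$ or a normalisation of $\Delta N$, which is where your third step looks for it.

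This matters because the difficulty you flag is fatal to your plan, not merely delicate. If the joint uncertainty were the actual spread of $\Phi_j=\frac{\pi}{2}\tanh(\theta_j'+\varphi_j')+\delta_j$, linearised propagation would produce a $\frac{\pi}{2}\operatorname{sech}^2(\theta_j'+\varphi_j')$ prefactor, exactly as you observe, and since $\tanh$ is increasing an extremal analysis over the admissible phases would push toward the endpoints of $(0,\pi)$ (e.g.\ $\tanh(0)=0$), never selecting $\tanh(1)$; nor is there a canonical normalisation of $\Delta N$ to appeal to, since Lemma \ref{lemma_0} constrains only the products $\Delta_M\{\cdot\}\cdot\Delta N$. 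So the missing idea is not a sharper propagation estimate but the paper's (admittedly formal) identification of what ``joint uncertainty'' means: the theorem holds for that composite quantity by construction, and once the identification is adopted the proof reduces to the monotonicity argument above. Your write-up ends exactly where the paper's single substantive step begins, so as a proof of the stated bound it is incomplete.
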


\begin{proof}
To prove this result, it only needs to check that the following three conditions are true simultaneously:

($\uppercase\expandafter{\romannumeral 1}$) In this paper, the method used to estimate all the phases is the covariance measurement which is  provided by Macchiavello \cite{a13}( detailed information sees section Appendix \ref{multiple}). This is consistent with the method used in \cite{a9}.

($\uppercase\expandafter{\romannumeral 2}$)  The quantum state which is used for extracting the phase,  is consistent with the state definition of quantum images. That is, the state representation of quantum image
\begin{eqnarray}
|I( \{ \theta_j \} )\rangle = \frac{1}{2^{n}}  \sum_{j=0}^{2^{2n}-1} (|0\rangle + e^{i\theta_j}|1\rangle) \bigotimes |j\rangle
\end{eqnarray}
can be turned to the form as Eq. (\ref{deformation_1})£¬
\begin{eqnarray}
|I( \{ \theta_j \} )\rangle = \frac{1}{\sqrt{2^{2n}}} (|0\rangle +e^{i\theta_1}|1\rangle + ... ++e^{i\theta_{2^{2n}}}|2^{2n}\rangle)
\end{eqnarray}
The two points ($\uppercase\expandafter{\romannumeral 1}$) and ($\uppercase\expandafter{\romannumeral 2}$) guarantee that the phases could be extracted from the quantum state.

Since the $j^{th}$ position of carrier image and the $j^{th}$ position of embedder image are two independent physical system,
for a synthesized pixel point, to distinguish two different independent physical systems, we label the number operators used in two independent physical systems for the phase extraction as $N_1$ and $N_2$.
by applying lemma \ref{lemma_0}, the uncertainty relation for each physical system could be represented as,
\begin{eqnarray} \label{un_1}
\Delta_M\{\varphi\} \cdot \Delta N_1 \geq \frac{1}{2}, \  \  \Delta_M\{\theta\} \cdot \Delta N_2 \geq \frac{1}{2},
\end{eqnarray}
where $\Delta N_1$ and $\Delta N_2$ represent the variance of the number operator in the first and second physical system. Similarly,
$\Delta_M\{\varphi\}$ and $\Delta_M\{\theta\}$ are the variance of phases calculated with phases estimated from two kinds of states of the quantum images.

($\uppercase\expandafter{\romannumeral 3}$)
Let $var_j$ be the general phase formula of the synthesized image. It is known from Eq. (\ref{uncertain_2}) that the pixel of the synthesized image could be represented as
\begin{align} \label{un_3}
var_j = \frac{\pi}{2} \tanh (\theta_j' + \varphi_j') + \delta_j.
\end{align}
Note that, the first and most important is that, the pixel information of the synthesized image is taken by an independent physical system.
Secondly, there is a linear operation about the estimated phase $\varphi_j'$ in $\delta_j$.
Thirdly, there is a linear operation about two estimated value $\varphi_j'$ and $\theta_j'$. Namely, $\varphi_j' + \theta_j$.
Fourthly, there is a non-linear operation which is applied on $\theta_j'+\varphi_j'$. Namely, $\tanh(\theta_j'+\varphi_j')$.
Especially, to obtain the phases of the state of two different images is independent in the two different physical systems.
Thus we have
\begin{eqnarray} \label{un_4}
\Delta_M\{\varphi\} \cdot \Delta N_1  + \Delta_M\{\theta\} \cdot \Delta N_2 \geq 1.
\end{eqnarray}

Combined with the inequality (\ref{un_1}) and inequality (\ref{un_4}), the pixel uncertainty of the synthesized image has the relation
\begin{align} \label{916_taifeng}
\tanh (\Delta_M\{\varphi\} \cdot \Delta N_1  + \Delta_M\{\theta\} \cdot \Delta N_2) + \Delta_M\{\varphi\} \cdot \Delta N_1  \geq \tanh(1) + \frac{1}{2}.
\end{align}
So the result holds.
\end{proof}

The uncertainty relation reflected by inequality (\ref{uncertain_2})
could be summarized by the following figures.
Fig. \ref{unc} (a), the boundary conditions of satisfying the minimum uncertainty relation is when the quantum state of the system is coherent state.
Fig. \ref{unc} (b) and Fig. \ref{unc} (c) describe two uncertainty relations between $P$ (phase) and $N$ (number operator) where the components of the squeezed state are compressed.
We use Fig. \ref{unc}(a), Fig. \ref{unc}(b), Fig. \ref{unc}(c) to induct the meaning of Fig. \ref{unc} (d), where if the number operator $N$ is sufficient large (or enough large), the phase will be exactly estimated.

Before interpreting the meaning of the uncertainty relation in inequality (\ref{916_taifeng}), we emphasis that the inequality (\ref{916_taifeng}) describes an uncertainty relation of a pixel of a new independent physical system.
This is because the final phase of the state of the synthesized image ( see Eq. (\ref{un_3})) is generated by applying the unitary operator $U'$ (see  Eq. (\ref{unit})) on the state of quantum image (note, this quantum image state is a particle with phases information, see Eq. (\ref{eq_def})).
Based on this view, we simultaneously know that,  Eq. (\ref{916_taifeng}) reflects a joint uncertainty of a new physical system.
$\Delta_M\{\varphi\} \cdot \Delta N_1 \geq \frac{1}{2}$ and $\Delta_M\{\theta\} \cdot \Delta N_2 \geq \frac{1}{2}$ are two independent physical systems, and the uncertainty relations could be described with table \ref{single_one} and table \ref{single_two} (for number operator, $\uparrow$ means that the number operator becomes larger, for precision of phase, the sign $\downarrow$ means that the precision of the phase estimation decreases, and so on), respectively.
Finally, $\Delta_M\{\varphi\} \cdot \Delta N_1 \geq \frac{1}{2}$ and $\Delta_M\{\theta\} \cdot \Delta N_2 \geq \frac{1}{2}$ induce the joint uncertainty relation between $\theta_j'$, $\varphi_j'$ and $N_1$, $N_2$ with inequality (\ref{916_taifeng}) (see table \ref{synthesized_pixel}).

\begin{table}
\renewcommand{\arraystretch}{1.3}
\caption{\label{single_one} Uncertainty relation in single physical system for extracting a phase from carrier image  ($\Delta_M\{\varphi\} \cdot \Delta N_1 \geq \frac{1}{2}$)}
\begin{center}
\begin{tabu}{|c|c|}
\hline
 the trend of N            &    phase precision              \\
\hline
$N_1 \ \uparrow $          &    $\varphi_j$ $\uparrow$       \\
\hline
$N_1$ $\downarrow$         &    $\varphi_j$ $\downarrow$      \\
\hline
\end{tabu}
\end{center}
\end{table}

\begin{table}
\renewcommand{\arraystretch}{1.3}
\caption{\label{single_two} Uncertainty relation in single physical system for extracting a phase from embedder image ($\Delta_M\{\theta\} \cdot \Delta N_2 \geq \frac{1}{2}$)}
\begin{center}
\begin{tabu}{|c|c|}
\hline
 the trend of N            &    phase precision              \\
\hline
$N_2 \ \uparrow $          &    $\theta_j$ $\uparrow$       \\
\hline
$N_2$ $\downarrow$         &    $\theta_j$ $\downarrow$      \\
\hline
\end{tabu}
\end{center}
\end{table}

\begin{table}
\renewcommand{\arraystretch}{1.3}
\caption{\label{synthesized_pixel} The joint uncertainty relation of the synthesized pixel of the synthesized image. The joint uncertainty relation  between $N_1$, $N_2$ and $\theta_j' + \varphi_j'$ ($\tanh (\Delta_M\{\varphi\} \cdot \Delta N_1  + \Delta_M\{\theta\} \cdot \Delta N_2) + \Delta_M\{\varphi\} \cdot \Delta N_1  \geq \tanh(1) + \frac{1}{2}$)}
\begin{center}
\begin{tabu}{|c|c|c|}
\hline
 the trend of $N_1$     &  trend of $N_2$               &   precision of joint phase    \\
\hline
$N_1$  $\uparrow $       &   $N_2 $ $\uparrow$          &    $(\theta_j + \varphi_j)$ $\uparrow$       \\
\hline
$N_1$  $\downarrow$      &   $N_2 $ $\uparrow$          &    $(\theta_j + \varphi_j)$ uncertain     \\
\hline
$N_1$  $\uparrow $       &   $N_2 $ $\downarrow$        &    $(\theta_j + \varphi_j)$ uncertain       \\
\hline
$N_1$  $\downarrow$      &   $N_2 $ $\downarrow$        &    $(\theta_j + \varphi_j)$ $\downarrow$     \\
\hline
\end{tabu}
\end{center}
\end{table}

\begin{figure} 
\centering
\subfigure[]{\includegraphics[height=2cm,width=2cm]{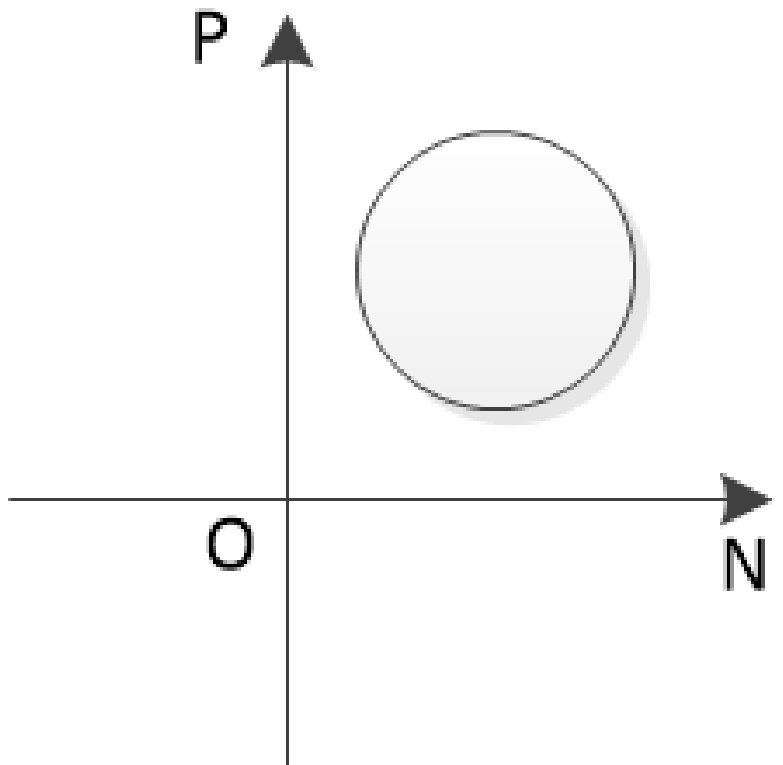}}
\subfigure[]{\includegraphics[height=2cm,width=2cm]{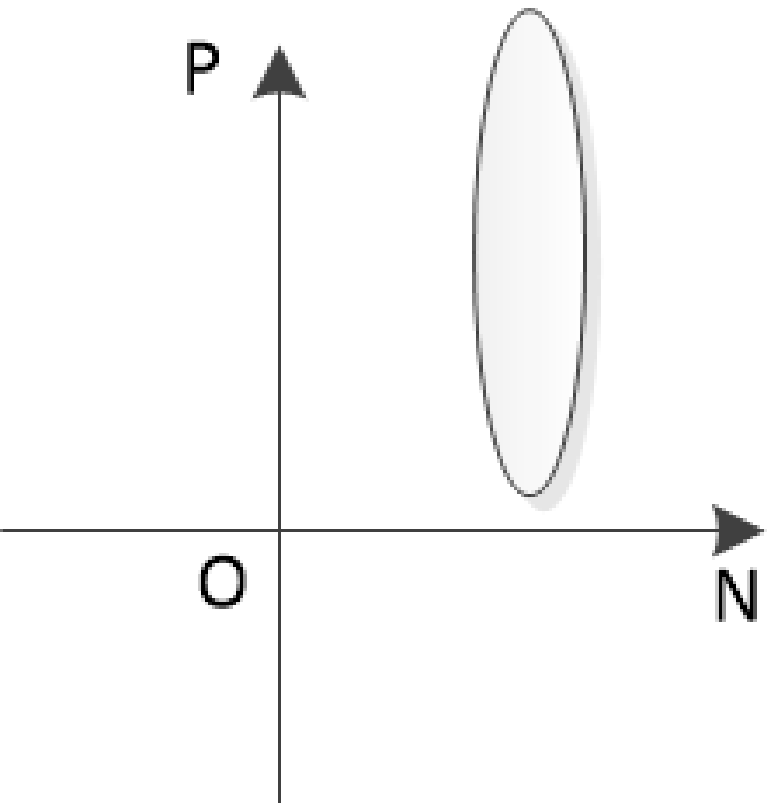}}
\subfigure[]{\includegraphics[height=2cm,width=2cm]{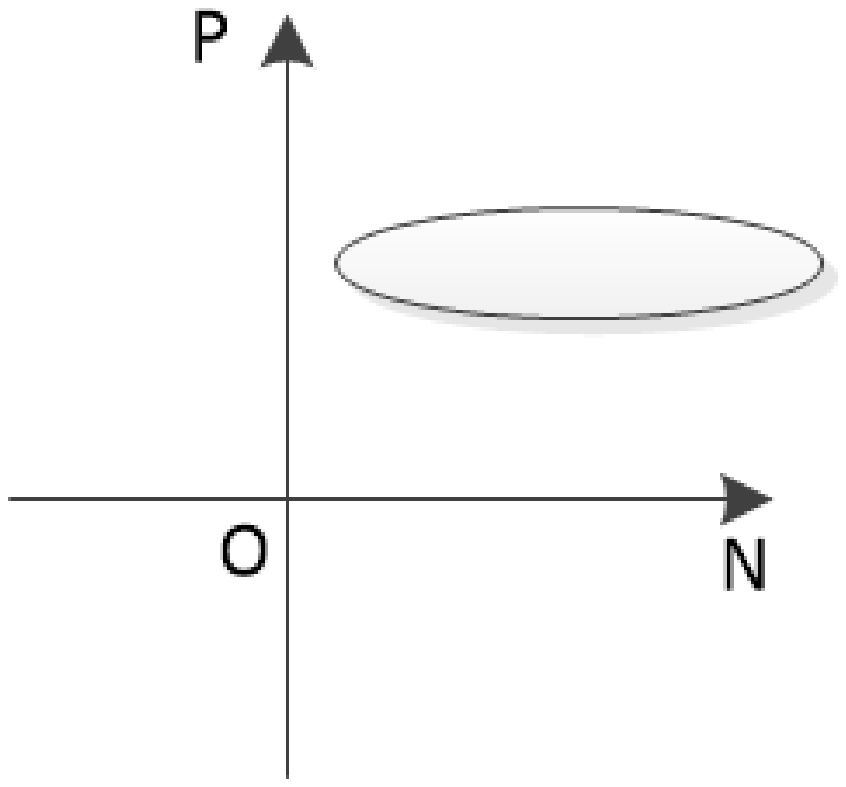}}
\subfigure[]{\includegraphics[height=2 cm,width=4cm]{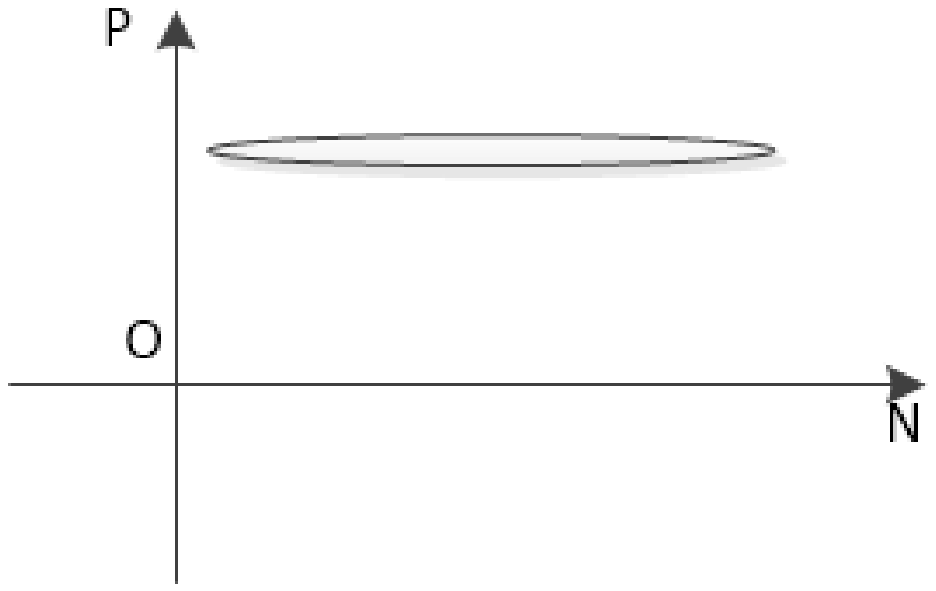}}
\caption{\label{unc} (a) Coherent state satisfies the equivalence of the lower bound. (b) and (c) describe two uncertainty relations where the orthogonal components of the squeezed state are compressed. (d) The uncertainty with larger particles number (represented by N) than (c).}
\end{figure}

\subsection{Effectiveness analysis of overflow control}
\label{restriain}

The core of algorithm \ref{algorithm_1} is to control the phase overflow, so the effectiveness of algorithm \ref{algorithm_1} is decided by the effectiveness of phase overflow control.
Once getting the phases which are estimated from some quantum states, we then can analyse these phases with classical methods, because these values are classical information.
On the other hand, note that the phases required in Eq. (\ref{eq_def}) are in $(0, \frac{\pi}{2})$, and the error of phase estimation always exists (see inequality (\ref{uncertain_2})), so the measurement results exceed $\frac{\pi}{2}$ is possible. For the sake of consistency and rationality, we should guarantee that all these phases exceeding $\frac{\pi}{2}$  remain in a legal range.

$\mathbf{Restictions \ 1}$
Assume that $\varphi_j'$ is the phase got by applying MPE on the state of quantum image  at the $j^{th}$ position. Thus we have the following restrictions:
\begin{equation}
\begin{cases}
\varphi_j' = \varphi_j' \ mod  \  \frac{\pi}{2},  \ if  \   \varphi_j'> \frac{\pi}{2} \\
\varphi_j' = \varphi_j',                          \ if  \   0 <\varphi_j' < \frac{\pi}{2}.
\end{cases}
\end{equation}
That is, $\varphi_j' \in (0, \frac{\pi}{2})$ is a mandatory requirement. let the estimated phases $\varphi_j'$ do operation $(\varphi_j' \mod \frac{\pi}{2})$, and the results also are labelled with $\varphi_j'$.

$\mathbf{Restictions \ 2}$
If the phases in the quantum state of the synthesized image exceed $\frac{\pi}{2}$, then these phases are called exception phases.

From Eq. (\ref{syn}), we know that the general phase expression in the state of the synthesized image is
\begin{align} \label{phase_syn}
\frac{\pi}{2} \tanh (\theta_j' + \varphi_j') + \delta_j.
\end{align}
According to the restrictions 2, if Eq. (\ref{phase_syn}) exceeds $\frac{\pi}{2}$, then the $j^{th}$ pixel is an exception pixel.

\begin{definition} \label{compression_def}
Let $\psi_{in}$ be the length of the phase space to be compressed, and $\psi_{out}$ be the length of the phase space compressed, then we call the ratio $\frac{\psi_{out}}{\psi_{in}}$ the compression ratio (of overflow control).
\end{definition}
From the common sense, if such a ratio $\frac{\psi_{out}}{\psi_{in}} <1$, then the strategy is effective. If $\frac{\psi_{out}}{\psi_{in}}=1$, then the strategy is ineffective.

\subsubsection{Compression ratio and effectiveness of the algorithm}

By compressing phase information to control the phase in the specified range $(0, \frac{\pi}{2})$,
the advantage of our control strategy is reflected by a compression ratio.
About the effectiveness of the overflow control, we have the following result.
\begin{theorem} \label{theorem_5}
Assume that $\theta_j'$ and $\varphi_j'$ are the $j^{th}$ estimated phase corresponding to their true phase $\theta_j$ and $\varphi_j$ of two different images, $(j \in \{1, 2, ..., 2^{2n}\})$, respectively.
Let $\delta_j =\varphi_j - \varphi_j'$, when the number operator $N_1$ and $N_2$ used in two physical systems are enough large, the limit of the compression ratio of phase is $\frac{1}{2}$.
\end{theorem}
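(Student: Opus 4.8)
The plan is to combine the large-$N$ asymptotics of the phase-estimation error with a direct computation of the image of the phase-sum interval under the compression map $\frac{\pi}{2}\tanh(\cdot)$, and then read off the ratio $\frac{\psi_{out}}{\psi_{in}}$ from Definition \ref{compression_def}.

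First I would dispose of the error term. Applying the uncertainty relation (\ref{uncertain_2}) of Lemma \ref{lemma_0} to each physical system as in (\ref{un_1}) gives $\Delta_M\{\varphi\} \geq \frac{1}{2\,\Delta N_1}$ and $\Delta_M\{\theta\} \geq \frac{1}{2\,\Delta N_2}$. These bounds are the only obstruction to perfect estimation, so letting the number operators $N_1, N_2$ grow without bound drives $\Delta N_1, \Delta N_2 \to \infty$ and hence the estimation variances to $0$; the estimated phases then converge to the true ones, $\theta_j' \to \theta_j$ and $\varphi_j' \to \varphi_j$, so the error item $\delta_j = \varphi_j - \varphi_j'$ vanishes in the limit. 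Consequently the synthesized phase (\ref{phase_syn}) reduces to $\frac{\pi}{2}\tanh(\theta_j' + \varphi_j')$.

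Next I would identify the two phase spaces of Definition \ref{compression_def}. After the mandatory clipping of Restrictions 1 each estimated phase lies in $(0, \frac{\pi}{2})$, so the uncontrolled sum $\theta_j' + \varphi_j'$ --- the quantity that would overflow $\frac{\pi}{2}$ and therefore must be compressed --- ranges over $(0, \pi)$, giving $\psi_{in} = \pi$. Because $\tanh$ is monotone increasing, the map $x \mapsto \frac{\pi}{2}\tanh(x)$ carries $(0,\pi)$ bijectively onto $(0, \frac{\pi}{2}\tanh(\pi))$, so the compressed space has length $\psi_{out} = \frac{\pi}{2}\tanh(\pi)$. By Definition \ref{compression_def} the compression ratio is therefore $\frac{\psi_{out}}{\psi_{in}} = \frac{\tanh(\pi)}{2}$. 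Invoking the numerical fact recorded earlier that $\tanh(x) \approx 1$ once $x \geq 3$, together with $\pi > 3$, we may take $\tanh(\pi) \to 1$ in this saturated regime, so the output interval fills $(0, \frac{\pi}{2})$ and the limiting compression ratio equals $\frac{1}{2}$, as claimed.

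I expect the first step to be the main obstacle, or at least the one requiring the most care: passing rigorously from ``$N_1, N_2$ large'' to ``$\delta_j \to 0$'' relies on the Heisenberg-limited bound (\ref{uncertain_2}) actually being (asymptotically) saturated as the particle number grows --- the regime pictured in Fig. \ref{unc}(d) --- rather than merely bounding the variance from below. A secondary point to check is that the clipping of Restrictions 1 remains compatible with this limit, so that the interval to be compressed is genuinely the full $(0,\pi)$ and not some $N$-dependent subinterval; this is what pins down $\psi_{in} = \pi$ and hence the value $\frac{1}{2}$ rather than some other constant.
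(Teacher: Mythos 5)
Your proposal is correct and follows essentially the same route as the paper's own proof: the paper likewise argues that for $N_1,N_2$ large enough $\delta_j$ becomes an infinitesimal, reduces the synthesized phase to $\frac{\pi}{2}\tanh(\theta_j'+\varphi_j')$, and then reads the ratio off as the length of the compressed interval over the length of $(0,\pi)$, obtaining $\frac{\pi/2}{\pi}=\frac{1}{2}$. If anything, your version is marginally more careful than the paper's --- you compute the exact image length $\psi_{out}=\frac{\pi}{2}\tanh(\pi)$ and only then invoke $\tanh(\pi)\approx 1$, and you correctly flag that inequality (\ref{uncertain_2}) is only a lower bound (so ``$N$ large $\Rightarrow \delta_j\to 0$'' needs saturation, a step the paper asserts without comment) --- but these are refinements of the same argument, not a different one.
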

\begin{proof}
According to the known conditions, such a ratio could be expressed
\begin{align} \label{ratio1}
\frac{\frac{\pi}{2} \tanh (\theta_j' + \varphi_j') + \delta_j}{\theta_j' + \varphi_j'}.
\end{align}
When $N_1$ and $N_2$ are enough large, then $\delta_j$ is an infinitesimal. The Eq. \ref{ratio1} approximates
\begin{align} \label{ratio2}
\frac{\frac{\pi}{2} \tanh (\theta_j' + \varphi_j')}{\theta_j' + \varphi_j'}.
\end{align}
Since
\begin{align}
\theta_j'+\varphi_j' \in (0, \pi),  \   \frac{\pi}{2} \tanh (\theta_j' + \varphi_j') \in  (0,  \frac{\pi}{2}),
\end{align}
that's to say, the interval $(0, \pi)$ is compressed into $(0, \frac{\pi}{2})$.
Thus the compression ratio is $\frac{\frac{\pi}{2}}{\pi} = \frac{1}{2}$.
So the result holds.
\end{proof}

Note that $\tanh(x)$ is a canonical function about $x$, and the feature of $\tanh(x)$ implies that,  the compression ratio in interval $(0,3)$ is greater than ratio in the interval $(3, \pi)$.
Therefore, the compression ratio is non-uniform when $x$ of $\tanh(x)$ changes in the interval $(0, \pi)$.

\begin{corollary} \label{corollary_1}
Assume that $N_1$ is a general number operator used for phase extraction of the carrier image, then the compression ratio is in the range $[0, \frac{1}{2})$.
\end{corollary}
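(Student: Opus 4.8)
The plan is to read Corollary~\ref{corollary_1} as the finite-resolution refinement of Theorem~\ref{theorem_5}: the value $\tfrac12$ there is obtained only in the idealized limit $N_1,N_2\to\infty$, and the goal is to show that for a general (finite) number operator $N_1$ the compression ratio is forced strictly below $\tfrac12$ while remaining non-negative. I would start from Definition~\ref{compression_def} together with the ratio recorded in the proof of Theorem~\ref{theorem_5}. Since every estimated phase lies in $(0,\tfrac{\pi}{2})$, the input variable $x=\theta_j'+\varphi_j'$ ranges over $(0,\pi)$, so $\psi_{in}=\pi$; the overflow-controlled synthesized phase $\tfrac{\pi}{2}\tanh(\theta_j'+\varphi_j')+\delta_j$ of Eq.~(\ref{syn}) is constrained to $(0,\tfrac{\pi}{2})$ by the overflow mechanism, so $\psi_{out}\le\tfrac{\pi}{2}$. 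In particular the aggregate image of the full input interval under $h(x)=\tfrac{\pi}{2}\tanh(x)$ has length $h(\pi)-h(0)=\tfrac{\pi}{2}\tanh(\pi)$.

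The lower bound is immediate: $\psi_{in}>0$ and $\psi_{out}\ge 0$ as a length, whence $r=\psi_{out}/\psi_{in}\ge 0$, with the degenerate value $r=0$ corresponding to total collapse of the output interval. For the upper bound I would exploit precisely the feature that Theorem~\ref{theorem_5} suppresses when it replaces $\tanh(\pi)$ by $1$: the inequality $\tanh(x)<1$ holds strictly for every finite $x$, so $\tanh(\pi)<1$ and the exact aggregate ratio over $(0,\pi)$ is $\tfrac{\tanh(\pi)}{2}<\tfrac12$ rather than the approximate $\tfrac12$. For a general finite $N_1$ the estimation error $\delta_j=\varphi_j-\varphi_j'$ cannot vanish, by the Heisenberg-type bound of inequality~(\ref{uncertain_2}) (Lemma~\ref{lemma_0}); this is exactly the hypothesis dropped in passing from Theorem~\ref{theorem_5} to the corollary, and it keeps us in the regime where the idealized supremum $\tfrac12$ is not attained. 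Combining, $\psi_{out}<\tfrac{\pi}{2}$ strictly, so $r<\tfrac12$, and together with the lower bound this gives $r\in[0,\tfrac12)$.

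The step I expect to be the main obstacle is making the strictness of the upper bound airtight rather than merely heuristic, and in particular guarding against the non-uniformity of the compression. As the remark following Theorem~\ref{theorem_5} notes, the ratio over a subinterval such as $(0,3)$ exceeds that over $(3,\pi)$, and a direct computation shows the subinterval ratio over $(0,3)$ actually sits slightly \emph{above} $\tfrac12$; it is only the severely compressed tail $(3,\pi)$, where $\tanh$ is nearly flat, that drags the aggregate below $\tfrac12$. I would therefore be careful to phrase the corollary for the entire phase space $(0,\pi)$ as demanded by Definition~\ref{compression_def}, where the relevant quantity is the single aggregate ratio $\tfrac{\tanh(\pi)}{2}$, and to invoke the strict inequality $\tanh(\pi)<1$ (reinforced by $\delta_j\neq0$ at finite $N_1$) to conclude that $\tfrac12$ is approached only as a limit and never reached for general $N_1$.
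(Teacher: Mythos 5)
Your proposal is essentially correct as a proof of the containment $r\in[0,\tfrac12)$, but it takes a genuinely different route from the paper. The paper's proof is a case analysis on the size of $N_1$: when $N_1$ is sufficiently large it invokes Theorem~\ref{theorem_5} to say the ratio only \emph{approximates} $\tfrac12$ (so the right endpoint is approached but not attained), and when $N_1$ is not large it uses $\delta_j\neq 0$ to exhibit the specific choice $\delta_j=\theta_j'+\varphi_j'-\tfrac{\pi}{2}\tanh(\theta_j'+\varphi_j')$, for which the synthesized phase equals the input phase; the paper counts this as $\psi_{out}=0$ and concludes the left endpoint $0$ is actually attained. Note that this step silently redefines $\psi_{out}$ as the \emph{amount removed} (input minus output), whereas Theorem~\ref{theorem_5} and Definition~\ref{compression_def} use output over input --- under that consistent reading the paper's exhibited case gives ratio $1$, not $0$, and would contradict the corollary. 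You instead keep Definition~\ref{compression_def} consistent, aggregate over the whole input interval $(0,\pi)$, and derive strictness of the upper bound quantitatively from $\tanh(\pi)<1$, obtaining the explicit value $\tfrac{\tanh(\pi)}{2}\approx 0.498$; the paper never computes this and gets strictness only from the unattained limit. Your explicit attention to non-uniformity (the subinterval ratio over $(0,3)$ exceeding $\tfrac12$, so the bound holds only in the aggregate) is a real improvement --- it identifies the only reading under which the corollary is true, something the paper merely gestures at in the remark after Theorem~\ref{theorem_5}.

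Two soft spots to be aware of. First, your lower bound: you describe $r=0$ as ``total collapse of the output interval,'' but under your aggregate reading with $h(x)=\tfrac{\pi}{2}\tanh(x)$ strictly increasing this never occurs, so $0$ is a trivial bound rather than an attained value; the corollary as stated only asserts containment, so this is acceptable, but it means your proof does not recover the paper's (flawed) claim that $0$ ``happens.'' Second, your assertion that the synthesized phase is ``constrained to $(0,\tfrac{\pi}{2})$ by the overflow mechanism'' is not literally true when $\delta_j\neq 0$ --- a positive $\delta_j$ can push $\tfrac{\pi}{2}\tanh(\theta_j'+\varphi_j')+\delta_j$ past $\tfrac{\pi}{2}$; like the paper, you need to exclude the exception pixels of Restriction~2 by fiat before asserting $\psi_{out}\leq\tfrac{\pi}{2}$. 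With those two caveats made explicit, your argument is tighter than the paper's own.
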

\begin{proof}
When $N_1$ is a general number, it includes two cases.
Firstly, when $N_1$ is sufficient large, $\delta_j \approx 0$ with high probability. From theorem \ref{theorem_5}, we can conclude that the upper bound of the compression ratio approximates $\frac{1}{2}$.
Secondly, it is known from the inequality $\Delta_M\{\varphi_j'\} \cdot \Delta N_1 \geq \frac{1}{2}$, when $N_1$ is not sufficient large, we know $\delta_j \neq 0$ with high probability.
Since
\begin{align}
\varphi_j \in (0, \frac{\pi}{2}),   \   \varphi_j' \in (0, \frac{\pi}{2}),
\end{align}
(note, $\varphi_j$ is a true phase of the image at position $j$) then
\begin{align}
\delta_j = \varphi_j - \varphi_j' \in (-\frac{\pi}{2}, \frac{\pi}{2}).
\end{align}
Further, since $\theta_j'>0$ and $\varphi_j'>0$, we have
\begin{align} \label{two_bound}
\pi > \frac{\pi}{2} \tanh (\theta_j' + \varphi_j') + \delta_j > -\frac{\pi}{2}.
\end{align}
We do not consider the exception case of overflow. Then according to the definition of the compression ratio, inequality (\ref{two_bound}) includes the lower bound of the synthesized pixel (or phase information).
That is, if
\begin{align}
\theta_j'+\varphi_j' = \frac{\pi}{2} \tanh (\theta_j' + \varphi_j') + \delta_j,
\end{align}
namely, when
\begin{align} \label{ratio_lower}
\delta_j = \theta_j'+\varphi_j' - \frac{\pi}{2} \tanh (\theta_j' + \varphi_j'),
\end{align}
then, according to the definition \ref{compression_def}, $\psi_{out} = \theta_j'+\varphi_j' - ( \frac{\pi}{2} \tanh (\theta_j' + \varphi_j') + \delta_j) =0$.
So, $0$ compression ratio happens. Because the phases estimated satisfy the uncertainty relations, so such case is possible.
Therefore, the compression ratio is in the range $[0, \frac{1}{2})$. In a word, the result holds.

\end{proof}

\begin{corollary}  \label{corollary_2}
Achieving the maximal compression ratio is a sufficient but not a necessary conditions for $\frac{\pi}{2} \tanh (\theta_j' + \varphi_j') + \delta_j < \frac{\pi}{2}$.
\end{corollary}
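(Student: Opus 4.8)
The plan is to read ``achieving the maximal compression ratio'' through Theorem \ref{theorem_5}: by that theorem the compression ratio attains its extremal value $\frac{1}{2}$ exactly in the limit $\delta_j \to 0$, i.e.\ when $\varphi_j = \varphi_j'$. I would therefore split the statement into a sufficiency direction and a non-necessity direction and establish each separately, since a ``sufficient but not necessary'' claim is precisely the conjunction of one implication and the failure of its converse.

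For sufficiency, I would set $\delta_j = 0$ (the configuration realising the maximal ratio) and observe that the synthesized phase $\frac{\pi}{2}\tanh(\theta_j' + \varphi_j') + \delta_j$ collapses to $\frac{\pi}{2}\tanh(\theta_j' + \varphi_j')$. Because $\theta_j' > 0$ and $\varphi_j' > 0$ give $\theta_j' + \varphi_j' > 0$, and $\tanh$ maps $(0,\infty)$ strictly into $(0,1)$, I immediately get $\frac{\pi}{2}\tanh(\theta_j' + \varphi_j') < \frac{\pi}{2}$, which is exactly the target inequality. This direction is essentially immediate once the identification of the maximal ratio with $\delta_j = 0$ is accepted, and it relies only on the boundedness $\tanh < 1$.

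For non-necessity, I would exhibit configurations with $\delta_j \neq 0$ (so the maximal ratio is \emph{not} achieved) for which the inequality nonetheless holds. The key quantity is the strictly positive gap
\[
\frac{\pi}{2} - \frac{\pi}{2}\tanh(\theta_j' + \varphi_j') = \frac{\pi}{2}\bigl(1 - \tanh(\theta_j' + \varphi_j')\bigr) > 0,
\]
which is positive precisely because $\tanh < 1$. Any error term with $\delta_j < \frac{\pi}{2}\bigl(1 - \tanh(\theta_j' + \varphi_j')\bigr)$ — in particular every $\delta_j \leq 0$, together with a whole nondegenerate interval of small positive $\delta_j$ — keeps the synthesized phase strictly below $\frac{\pi}{2}$ while leaving $\delta_j \neq 0$. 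By Corollary \ref{corollary_1}, such nonzero $\delta_j$ genuinely occur whenever $N_1$ is not large enough, so these are physically realisable, not degenerate, configurations; this proves the inequality can hold without the maximal ratio being attained, hence the condition is not necessary.

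The main obstacle I anticipate is the interpretation of ``maximal compression ratio.'' Corollary \ref{corollary_1} shows the ratio lives in $[0, \frac{1}{2})$, so the value $\frac{1}{2}$ is a supremum approached in the limit rather than literally attained at any finite $N_1, N_2$. I would therefore phrase the sufficiency direction as a statement about the limiting configuration $\delta_j = 0$ (equivalently $N_1, N_2 \to \infty$) guaranteeing the inequality, and invoke continuity so that configurations sufficiently close to the maximal ratio inherit the same strict conclusion. Once that subtlety is settled, the remainder is elementary monotonicity of $\tanh$ and the strict bound $\tanh < 1$.
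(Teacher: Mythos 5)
Your proposal is correct and follows essentially the same route as the paper: the paper likewise splits the claim into sufficiency (the maximal ratio means $N_1,N_2$ are sufficiently large, so $\delta_j \approx 0$ and the synthesized phase reduces to $\frac{\pi}{2}\tanh(\theta_j'+\varphi_j') < \frac{\pi}{2}$ because $\tanh < 1$) and non-necessity (a configuration with $\delta_j \neq 0$ that still satisfies the inequality). The only difference is that the paper exhibits the single numerical instance $\theta_j'+\varphi_j' = \frac{\pi}{10}$, $\delta_j = \frac{\pi}{10}$, whereas you characterize the full interval $\delta_j < \frac{\pi}{2}\bigl(1-\tanh(\theta_j'+\varphi_j')\bigr)$ of admissible errors and explicitly handle the supremum-versus-attainment subtlety that the paper glosses over with its ``$\approx$'' language --- a mild strengthening, not a different method.
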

\begin{proof}
Assume that the number operator used to estimate the $j^{th}$ phases of two physical systems are $N_1$ and $N_2$.
On the one hand,
according to the theorem \ref{theorem_5} and corollary \ref{corollary_1}, the maximal compression ratio means that, $N_1$ and $N_2$ are sufficient large, namely, $\delta_j \approx 0$.
Then, the phase of the state of the synthesized image at $j^{th}$ position approximates $\frac{\pi}{2} \tanh (\theta_j' + \varphi_j')$, ($j$ could be any one of $0, 1, 2, ..., 2^{2n}-1$).
Obviously, no matter what $\theta_j' + \varphi_j' \in (0, \pi)$ is, the synthesized result of the phase could not overflow.
On the other hand, whether the result of the synthesized phase overflow or not is determined by $\frac{\pi}{2} \tanh (\theta_j' + \varphi_j')$ and $\delta_j$ simultaneously.
It is apparent that the condition which satisfies  $\frac{\pi}{2} \tanh (\theta_j' + \varphi_j') + \delta_j < \frac{\pi}{2}$ if $\theta_j' + \varphi_j'$ and $\delta_j$ are not too large, such as, when $\theta_j' + \varphi_j' = \frac{\pi}{10}$, and $\delta_j=\frac{\pi}{10}$.
That's to say, $\frac{\pi}{2} \tanh (\theta_j' + \varphi_j') + \delta_j < \frac{\pi}{2}$ does not mean that the compression ratio achieves the maximal compression ratio.
So, the result holds.
\end{proof}

\subsubsection{Overflow risk and restraint mechanism}

In this section, $\theta_j'$ and $\varphi_j'$ are assumed to be the $j^{th}$ estimated phase corresponding to the true phase $\theta_j$ and $\varphi_j$ of two different images, $(j \in \{1, 2, ..., 2^{2n}\})$, respectively.
Intuitively, since
\begin{align}
\delta_j =\varphi_j - \varphi_j',   \    \theta_j, \varphi_j \in (0, \frac{\pi}{2}), \   \theta_j', \varphi_j' \in \{0, \frac{\pi}{2}\},
\end{align}
the mathematic relation  inequality (\ref{two_bound}) holds.
Especially, we note that $(-\frac{\pi}{2}, 0)$ and $[\frac{\pi}{2}, \pi)$ are the two phase overflow intervals for the state of the synthesized quantum image. Indeed, it seems to be unfortunate. However, through analyzing some numerical relations, we will known that, though these overflow cases are possible to happen, the actual fact is not so bad.

Actually, the risks
are degraded when the two parts $\frac{\pi}{2}\tanh(\theta_j'+\varphi_j')$ and $\delta_j$ are combined.
On the one side, owing to $\tanh(3) \approx 1$, thus when
\begin{align}
x \in (0, 3),  \   x+\Delta x   \    \in (0 ,3), \ and  \  x - \Delta x   \   \in (0 ,3),
\end{align}
the steep function curve of $\tanh(x)$ means that a little deviation $\Delta x$ will lead to a larger  difference.
That is,
\begin{align}
\frac{\pi}{2}\tanh(x + \Delta x) \gg \frac{\pi}{2}\tanh(x).
\end{align}
Similarly, we have
\begin{align}
\frac{\pi}{2}\tanh(x - \Delta x) \ll \frac{\pi}{2}\tanh(x).
\end{align}
On the other side, $\delta_j>0$ implies that $\varphi_j'$ is less than $\varphi_j$, so
\begin{align}
\frac{\pi}{2}\tanh(\theta_j'+\varphi_j) > \frac{\pi}{2}\tanh(\theta_j'+\varphi_j').
\end{align}
Namely, $\frac{\pi}{2}\tanh(\theta_j'+\varphi_j')$ decreases, and if $\theta_j'+\varphi_j' \in (0,3)$, $\frac{\pi}{2}\tanh(\theta_j'+\varphi_j')$ reduces more.
Correspondingly, $\delta_j<0$ means that $\varphi_j'$ is larger than $\varphi_j$, so
\begin{align}
\frac{\pi}{2}\tanh(\theta_j'+\varphi_j) < \frac{\pi}{2}\tanh(\theta_j'+\varphi_j').
\end{align}
Obviously, $\frac{\pi}{2}\tanh(\theta_j'+\varphi_j')$ increases, and if $\theta_j'+\varphi_j' \in (0,3)$, $\frac{\pi}{2}\tanh(\theta_j'+\varphi_j')$ increases more.
Thus the analysis could be summarized as follows.
If $\delta_j >0$, then $\frac{\pi}{2} \tanh (\theta_j' + \varphi_j')$ becomes smaller, $\delta_j$ becomes larger.
If $\delta_j <0$, then $\frac{\pi}{2} \tanh (\theta_j' + \varphi_j')$ becomes larger, $\delta_j$ becomes smaller.
So
there exists a compromise relation between $\frac{\pi}{2} \tanh (\theta_j' + \varphi_j')$ and $\delta_j$.
However, the degree of compromise could not be measured strictly with numerical relation, because all these relations remain the extension of the uncertainty relation (see inequality (\ref{uncertain_2})).
In brief, the phase overflow mechanism proposed in this paper degrades the possibility of happening phase overflow.
So to some extent,  algorithm \ref{algorithm_1} effectively restrains the possibility of the phase overflow.

\subsubsection{Effectiveness and uncertainty of the synthesized pixel}
Theorem \ref{theorem_5} shows that the maximal compression ratio exists and equals $\frac{1}{2}$.
Corollary \ref{corollary_1} points out that the range of the compression ratio is in the range $[0, \frac{1}{2})$.
Corollary \ref{corollary_2} discusses the properties of the maximal compression ratio. These three results indicates that the phase overflow control in algorithm \ref{algorithm_1} is effective on the whole.
Theorem \ref{theorem_5} also indirectly indicates that the operator constructed  as Eq. \ref{unit} to implement the phase rotation is rational.

The compression ratio is associated with the precision of the phase estimation.
Taken two corresponding pixels to be synthesized as an example,
Assume that the number operators used to extract the phases of two corresponding physical system are $N_1$ and $N_2$, if $N_1$ and $N_2$ becomes larger, then it indicates the higher precision of the phase estimation, the less $\delta_j$, and the larger compression ratio with higher probability.
So the compression ratio is determined by $N_1$ and $N_2$.
Owing to the uncertainty relation $\Delta_M \{ \varphi_j' \} \cdot \Delta\{N_1\} \geq \frac{1}{2}$, and according to the theorem \ref{theorem_5} and corollary \ref{corollary_1}, except for the exception case of phase overflow, the compression ratio of the overflow control in the synthesized image is in the range $[0, \frac{1}{2})$, but uncertain.
Because the phase obtained is larger or smaller than its true phase is a probability issue, not a deterministic issue.
So it is impossible to determine whether the specified pixel overflow or not.
However, for quantum image synthesis, if only the number operators used in the covariance measurement are sufficient large, we then could judge that the effectiveness of the synthesized pixels of the quantum image could be good.

 \section{Quantum versus classical image processing}
 \label{compare}
 In our quantum images synthesizing procedure, a quantum image with $2^{2n}$ phases as the input is taken, where $n$ is the number of pixels on vertical and horizontal coordinates, respectively. By applying MPE, $2^{2n}$ phases are obtained.
We compare the classical and quantum image processing from the following aspects.
 (1) Image edition is a general task in the classical image processing. The synthesis procedure of quantum image in this paper shows that quantum images can also be edited.
This paper illustrates what aspects should be considered if we want to implement the quantum image synthesis successfully.
 (2) Compared with classical image processing, quantum image processing relies on the quantum and classical methods simultaneously.
 (3) Since to process a large size image in a classical computer is very difficult, so modern image processing on the classical computers always depends on the deep learning network. Otherwise, it is hardly to deal with it effectively. However, this task could be implemented on a quantum computer with matrix computation.
 (4) The greatest advantage in quantum image processing is that much less of memory is needed for storing a quantum image.
  For a quantum image represented with a matrix of the size $2^n \times 2^n$, $2n + 1 =\log_2 ^ {(2^{2n})} + 1$ qubits is enough  to store phase (pixel) matrix.
  This means that the storage space needed for storing the pixel information of a quantum image is drastically reduced.
  (5) However, the auxiliary space remains $O(2^{2n})$ needed to ensure the phase precision when we estimate phases with MPE.

\section{Conclusions}
\label{conclusion}

This paper raises how to implement synthesis of quantum images, and a method is brought forward to resolve this problem.
Since the pixel is represented with phase,to obtain the phases of the quantum image to be synthesized is the first step of implementing the synthesis of quantum images.
Thus MPE is applied to obtain the multiple phases of the quantum images.
However, since the error could be introduced by MPE and phase addition operation could lead to the phase overflow,
a rotation operator which is embedded the overflow control mechanism  is constructed. Applying this rotation operator on the state of carrier, we could get the goal state of the synthesized quantum image.
Then, we compute the joint uncertainty relation of the pixel of the synthesized image. Based on this calculating result, the discussion about compression ratio shows that the overflow control mechanism proposed in this paper to reduce the possibility of overflow is effective.
In this paper, we try to give a quantum image processing method for synthesizing two quantum images, so some defects might be in it, such as the complexity.
Therefore, better algorithms are expected to improve the performance of the quantum image synthesis further in the future.

\section*{Acknowledgements}
The first author would thank Professor Guangping He for useful suggestions. This work was partly supported by the National Natural Science Foundation of China (Nos. 61572532, 61876195, 61272058), the Natural Science Foundation of Guangdong Province of China (No. 2017B030311011), and the Fundamental Research Funds for the Central Universities of China (Nos. 17lgjc24).

%

\appendix
\section{Supplementary materials}
\subsection{Multiple phase estimation}
\label{multiple}
Parallelism is an important attributes of quantum information processing and  due
to that we can also expect that quantum information processing  of phases can be done  simultaneously and efficiently. There are, naturally, quite  a few ways to deal with this problem. Humphreys et al. \cite{a14} proposed one method to implement the MPE via finding simultaneously estimates of $D$ phases of the state $|\psi_{\theta}\rangle = \sum_{k=1}^D \alpha_k e^{i \mathbf{N_k \cdot \theta}}|\mathbf{N_k}\rangle$, where $\mathbf{N_k}$ is a number operator and  $D$ is a configuration number. However, this approach can not be used here because our form of the state of the quantum image is not consistent with the state considered in \cite{a14}. On the other side, the method proposed in \cite{a13} can handle this problem. We will summarize the  main idea of Macchiavello's\cite{a13} in the following.

We consider the estimation theory of $M$ independent phases $\phi_j\ (j=1,...,M)$ through the unitary transformation
\begin{eqnarray}
\rho_{\{\phi_j\}} = exp (-i \sum_{j=1}^M \phi_j \hat{H_j} ) \rho_0 exp(i \sum_{j=1}^M \phi_j \hat{H_j} )
 \end{eqnarray}
where $\hat{H_j}$ represent M commuting self-adjoin operators which are defined on the Hilbert space $\mathcal{H}$ of the considered quantum system.
The vectors $\{|{n_j}\rangle\}$ denote now  eigenvectors corresponding to the eigenvalues $n_j$ of the  operator $\hat{H_j}$.

According to the general framework of quantum estimation theory, a cost function $\bar{C}$ of $C({\bar{\phi_j},{\phi_j}})$ is defined as
\begin{eqnarray}
\bar{C} = \int_{0}^{2\pi} d\phi_1  ... \int_0^{2\pi} d\phi_M p_0(\{ \phi_j\}) \int_0^{2\phi} d\bar{\phi_1} \ ... \  \int_0^{2\pi} C({\bar{\phi_j}}, {\phi_j}) p({\bar{\phi_j}}|{\phi_j}),
 \end{eqnarray}
and depends on  the set of the $M$ estimated values ${\bar{\{ \phi_j\}}}$ corresponding to the $M$ actual values ${\{ \phi_j\}}$. The estimation problem is reduced to the problem of minimizing the average cost $\bar{C}$ by optimizing POVM $d\mu({\bar{\phi_j}})$.
 In the view of the basic laws of quantum mechanics, the relation $\int d\mu({\bar{\phi_j}}) = I$ has to be satisfied. Since there is no contribution to the average cost, $d\mu_{\bot}({\phi_j})$ is out of considerations.  For the detailed definition see \cite{a13}. $p_0({\phi_j})$ and $p({\bar{\phi_j}}|{\phi_j})$ are prior probability densities for real  values ${\phi_j}$ and the conditional probability of estimating the set of values ${\bar{\phi_j}}$ given to real values of ${\phi_j}$.

When a general class of cost functions  is considered, then by Holevo's  outcomes, the optimal POVM takes the form
\begin{eqnarray}
d\mu_{||}({\phi_j}) = \frac{d\phi_1}{2\pi} ... \frac{d\phi_M}{2\pi}  |e({\phi_j})\rangle \langle e({\phi_j})|,
 \end{eqnarray}
where $|e({\phi_j})\rangle$ is defined as
\begin{eqnarray}
|e({\phi_j})\rangle =  \sum_{n_j} exp(i\sum_j n_j \phi_j) |\{n_j\}\rangle.
 \end{eqnarray}
In such a case, the multiple phases of the state
\begin{eqnarray} \label{index_have_to}
|I({\phi_j})\rangle = \frac{1}{\sqrt{d}} (|0\rangle +e^{i\phi_1}|1\rangle + ... ++e^{i\phi_{d-1}}|d-1\rangle)
 \end{eqnarray}
can be estimated with the general POVM
\begin{eqnarray}
|e({\phi_j})\rangle = \sum_{n_j} exp(i  \sum_{j=1}^{d-1} n_j \phi_j) |n_0,n_1,...,n_{d-1}\rangle_s
 \end{eqnarray}
with the fidelity
\begin{eqnarray}
F({\phi_j}) = |\langle \psi_0|\psi({\phi_j})\rangle|^2 = \frac{1}{d^2}[d+2\sum_{j=1}^{d-1} cos\phi_j   \\
+ 2 \sum_{j>k} cos(\phi_j-\phi_k)],
 \end{eqnarray}
where
\begin{eqnarray}
|\psi_0\rangle = \frac{1}{d^N} \sum_{n_j} \sqrt{\frac{N!}{n_0!n_1! ... n_{d-1}!}} |n_0,n_1,..., n_{d-1}\rangle_s
 \end{eqnarray}
and where $N$ is the number of states $|e({\phi_j})\rangle $ that need to  be prepared.

\subsection{Shortcoming of constructing operator by using embedder's phases}
\label{shortcoming}
This section gives another method to construct a phase rotation transform.
The main goal is to compare with Eq. (\ref{unit}) and Eq. (\ref{matrix}), and then we can explore the advantage or disadvantage for the different transforms.
Let the phase extracted from Eq. (\ref{deformation_2}) be $(\theta_1', \theta_2', ..., \theta_{2^{2n}}')$.
Thus using $(\theta_1', \theta_2', ..., \theta_{2^{2n}}')$ to construct a rotation operator $U$, we have

\begin{eqnarray} \label{matrix}
 U =
\left[
\begin{array}{cc cc c cc cc cc c}
1_1         &   0               &   0       &   0               &   \cdots    &      0      &   0              & 0                 &  0                &0          &   0       & 0  \\
0           &   1_2             &   0       &   0               &   \cdots    &      0      &   0              & 0                 &  0                &0          &   0       & 0  \\
\vdots      &   \vdots          &  \vdots   &   \vdots          &   \ddots    &  \vdots     &   \vdots          &  \vdots   &   \vdots           &  \vdots          &  \vdots   &   \vdots\\
0           &   0               &   0       &   0               &   \cdots    & 1_{2^{2n}}  &   0              &  0                &   0               &   0       &   0       & 0  \\
0           &   0               &   0       &   0               &   \cdots    &      0      &   e^{i\theta_1'} &  0                &   0               &   0       &   0       & 0  \\
0           &   0               &   0       &   0               &   \cdots    &      0      &   0              &  e^{i\theta_2'}   &   0               &   0       &   0       & 0 \\
0           &   0               &   0       &   0               &   \cdots    &      0      &  0               &   0               &   e^{i\theta_3'}               &   0       &   0       & 0     \\
0           &   0               &   0       &   0               &   \cdots    &      0      &  0               &   0               &   0               &   e^{i\theta_4'}       &   0       & 0     \\
\vdots      &   \vdots          &  \vdots   &   \vdots          &   \vdots  &   \vdots      &  \vdots   &   \vdots          &  \cdots           &   \cdots  &  \ddots   &  \cdots\\
0           &   0               &   0       &   0               &   \cdots    &      0      &  0               &   0               &   0               &   0       &   0       & e^{i\theta_{2^{2n}}'},
\end{array}
\right]
\end{eqnarray}
where $1_1,1_2, ..., 1_{2^{2n}}$ are $1$. The index in each $1$ denotes how many $1$ used on the diagonal line (there are $2^{2n}$ $1$ in total ).
Apparently, $U U^{\dagger} = I$, so $U$ is unitary.

Applying $U$ on $|I(\varphi_j)\rangle$, we get the phase accumulation result $|res\rangle = U |I(\theta_j)\rangle$,
\begin{eqnarray}
|res\rangle = \frac{1}{2^n} \sum_{j=0}^{2^{2n}-1} |0\rangle|j\rangle + \frac{1}{2^n} \sum_{j=0}^{2^{2n}-1} e^{i(\theta_j' + \varphi_j)}|1\rangle|j\rangle.
\end{eqnarray}
That is,
\begin{eqnarray} \label{goal}
|res\rangle = \frac{1}{2^n} \sum_{j=0}^{2^{2n}-1} (|0\rangle +  e^{i(\theta_j' + \varphi_j)}|1\rangle) \bigotimes |j\rangle.
\end{eqnarray}
The analysis below will show that some problems exist if we directly apply $U$ on the state of quantum images.

According to the state representation of quantum image ( see Eq. (\ref{eq_def})), and combing with the restrictions 1 and restrictions 2 in section \ref{restriain}, for Eq. (\ref{goal}),
we have
\begin{align}
\varphi_j, \theta_j' \in (0, \frac{\pi}{2}), j=\{0, 1, ..., 2^{2n}-1\}.
\end{align}
However, according to the phase requirement about state representation of quantum images, we have
 \begin{align} \label{discuss_abnormal}
 \varphi_j + \theta_j' \in (0, \frac{\pi}{2}), j=\{0, 1, ..., 2^{2n}-1\}.
 \end{align}
That is, the allowable maximal phase is $\frac{\pi}{2}$(the upper bound of the phase accumulation is determined by Eq. (\ref{eq_def})).
If the phases in Eq. (\ref{eq_def}) which are not in the range $(0, \frac{\pi}{2})$ are undefined.
However, the addition in Eq. (\ref{discuss_abnormal}) is a common mathematic operation about two real values, and the result should be
\begin{align} \label{discuss_normal}
 \varphi_j + \theta_j' \in (0, \pi), j \in \{0, 1, ..., 2^{2n}-1\}.
\end{align}
The contradiction between Eq. (\ref{discuss_abnormal}) and Eq. (\ref{discuss_normal}) indicates that the overflow happens with high risks. If nothing measures are to be taken, it is probable that the synthesized phases exceed $\frac{\pi}{2}$.

Two points should be emphasised. Firstly, since the phase estimation for the case of $0$ phase (pixel is $0$), the outcome will be empty, the synthesis operation above will not affect the case of 0 phases.
Secondly, $\theta_j'+ \varphi_j$ in Eq. (\ref{goal}) may exceed $\frac{\pi}{2}$ if nothing measure is to be taken.

Compared with Eq. (\ref{unit}), the advantage of this method is reduce the times of measurement, so the error of phase is reduced one half. But the disadvantage is also very obvious, that is, there is nothing we can do to restrain the overflow when the overflow happens. For example, we do not introduce $\tanh(x)$ to Eq. (\ref{matrix}), because one half of phases are unknown. This is the reason why we choose Eq. (\ref{unit}) as the phase rotation transform.

 \subsection{Proof of Theorem \ref{theorem_1}}
\label{theorem1_proof}
 The following procedures of proving theorem 1 is digested from \cite{a1}. Two steps implement the preparation of the quantum state.

 Step1. applying transform $\mathcal{H} = I \bigotimes H^{\bigotimes (2n+1)}$, and the result is assumed as $|S\rangle$, we have
 \begin{eqnarray}
 |S\rangle = \mathcal{H} |0\rangle \bigotimes |0\rangle^{\bigotimes 2n} = \frac{1}{2^n} |0\rangle \bigotimes \sum_{j=0}^{2n} |j\rangle.
 \end{eqnarray}
 Then, construct the rotation matrices $R_y(2\theta_j)$ (along the Y-axis by the angle $2\theta_j$) and controlled rotation matrices $R_j$, $(j=0,1,..., 2^{2n}-1)$,
 \begin{eqnarray}
 R_y(2\theta_j)=
 \left(
      \begin{array}{cc}
        \cos \theta_j               &   -\sin \theta_j             \\
        \sin \theta_j               &   \cos \theta_j
      \end{array}
    \right),
 \end{eqnarray}

 \begin{eqnarray}
 R_j=(I \bigotimes  \sum_{i=0,i\neq 1}^{2^{2n}-1}  |i\rangle \langle i| )  + R_y(2\theta_j) \bigotimes |j\rangle \langle j|.
 \end{eqnarray}

 Since $R_j R_j^{\dagger} = I^{2n+1}$, $R_j$ is unitary. Applying $R_k$ and $R_l R_k$ on $|S\rangle$ gives rise to the following result:
 \begin{eqnarray}
 R_k(|S\rangle) = R_k (\frac{1}{2^n} |0\rangle \bigotimes  \sum_{j=0}^{2n-1}  |i\rangle)   \nonumber \\
 = \frac{1}{2^n} [ |0\rangle \bigotimes   \sum_{i=0,i \neq k}^{2^{2n}-1}  |i\rangle \langle i| + ( \cos \theta_k |0\rangle + \sin \theta_k |1\rangle) \bigotimes |k\rangle],
 \end{eqnarray}

 \begin{eqnarray}
 &R_l R_k |S\rangle =  \frac{1}{2^n} [ |0\rangle \bigotimes   \sum_{i=0,i \neq k}^{2^{2n}-1}  |i\rangle \langle i|   \nonumber \\
 &+   ( \cos \theta_l |0\rangle + \sin \theta_l |1\rangle) \bigotimes |l\rangle] + ( \cos \theta_k |0\rangle + \sin \theta_k |1\rangle) \bigotimes |k\rangle].
 \end{eqnarray}

Thus we can conclude that $R |S\rangle = ( \prod_{i=0}^{2^{2n}-1}  R_i) |S\rangle$, and this is the final state what we intend to prepare for. The scale of the resource overhead is described as theorem \ref{theorem_1}.

 \subsection{ Proof of Lemma \ref{lemma_0}}
\label{lemma_proof}
This theory evolves from the uncertainty relation between the rotation angle $\theta$ and the angular momentum $L$ of the particles.

In this part, we'll introduce covariance measurement and uncertainty relation referred. The complete introduction confers \cite{a9}.

Let $\mathbf{G}$ be a locally compact transitive group of transformations of a parametric set $\mathbf{\theta}$, and $\{V_g\}$ a continuous unitary
ray representation of $\mathbf{G}$ in a Hilbert space $\mathcal{H}$. Let $M(d\theta)$ be a $\mathbf{\theta}-measurement$, that is a generalised resolution
of identity in $\mathcal{H}$ on Borel subsets of $\mathbf{\theta}$. A measurement $M(d\theta)$ is covariant with respect to $\{V_g\}$ if
\begin{eqnarray}
V_g^{\star} M(B)V_g = M(g^{-1}B),
\end{eqnarray}
for any Borel $B \subset \mathbf{\theta}$.
The  covariant measurement has the general form as
\begin{eqnarray} \label{appendix_1}
M(d\theta) = e^{iL \theta} P_0 e^{-iL \theta} \frac{d\theta}{2\pi},
\end{eqnarray}
where $P_o$ is a positive operator.

The optimal covariant measurement is defined as
\begin{eqnarray} \label{appendix_2}
\langle| M_{\star}(d\theta)|m'\rangle = e^{i(m-m')\theta} \frac{\varphi_m \cdot \bar{\varphi}}{|\varphi_m| \cdot |\varphi_{m'}|} \cdot \frac{d\theta}{2\pi}
\end{eqnarray}

The proof the lemma \ref{lemma_0} starts from the uncertainty of angular momentum.

Let $M_{\star}$ be a covariant measurement with both Bayes and minimax for any measure of  deviation, then
$E_M{e^{i\theta}} = \sum_{-l+1}^1 \bar{\psi_{m-1}} p_{m-1,m} \psi_m$, where $M$ represents the expectation of $P_M$.

Introducing the operators $E_{\mp} = \int^{\pi}_{-\pi} e^{{\pm i\theta}} M^{\star}(d\theta)$, so that
\begin{eqnarray} \label{appendix_3}
E_{-} = E_{+}^{\star}, \ E_{-}E_{+} = I-|l\rangle\langle l|, \  E_{+}E_{-} = I-|-l\rangle\langle-l|.
\end{eqnarray}

Further introducing cosine operator $C = \frac{1}{2}(E_{+}) + E_{-}$ and sine operator $S=\frac{i}{2}(E_{+} - E_{-})$, and we have
\begin{eqnarray} \label{appendix_4}
C^2 + S^2 = I - \frac{1}{2} [|l\rangle \langle l| + |-l\rangle \langle -l|], \ [C,S] = \frac{i}{2} [|-l\rangle\langle -l|-l\rangle \langle l|].
\end{eqnarray}

\begin{eqnarray} \label{appendix_5}
E_{M_\star}\{e^{i\theta}\} = \langle \varphi| E_{-} | \varphi \rangle \equiv \bar{C^2} + \bar{S^2}, \\
|E_{M_{\star}} |^2  = \bar{C^2} + \bar{S^2}.
\end{eqnarray}
Using Eq. \ref{appendix_4}, we have
\begin{eqnarray} \label{appendix_6}
D_{M_{\star}}\{e^{i\theta}\} = ||(C - \bar{C})\varphi||^2 +  ||(S - \bar{S})\varphi||^2 + \frac{1}{2} (|\varphi_{-l}|^2 + |\varphi_l|^2)  \\
\equiv (\Delta C)^2 + (\Delta S)^2 + \frac{1}{2} (|\varphi_{-l}|^2 + |\varphi_l|^2).
\end{eqnarray}
Since $[C, L] = -iS$ and $[S, L] = iC$, the uncertainty relation satisfies with
\begin{eqnarray} \label{appendix_7}
(\Delta C)^2 + (\Delta S)^2 \geq \frac{1}{4} \bar{S^2},  \ (\Delta S)^2 (\Delta L)^2  \geq \frac{1}{4} \bar{C^2}.
\end{eqnarray}
By applying Eq. (\ref{appendix_4}), (\ref{appendix_5}), (\ref{appendix_6}), and inequality (\ref{appendix_7}), we obtain
\begin{eqnarray}
\Delta_{M_{\star}} \{\theta\}^2 \geq \frac{1}{4(\Delta)^2} + \frac{1}{2}(|\varphi_{-l}|^2 + |\varphi_l|^2) |E_{M_{\star}} \{ e^{i \theta}\}|^{-2}.
\end{eqnarray}
Based on Eq. \ref{appendix_2}, introducing phase operator
\begin{eqnarray}
P= \int_{-\pi}^{\pi} e^{i\varphi} M_{\star}(d \varphi), \  and  \   P^{\star} = \int_{-\pi}^{\pi} e^{-i\varphi} M_{\star}(d \varphi)
\end{eqnarray}
which has the relation
\begin{eqnarray}
PP^{\star} = I, \   P^{\star} P = I- |0\rangle\langle 0|.
\end{eqnarray}
With this condition, through analogizing the uncertainty relation about angular momentum, we can get
\begin{eqnarray}
\Delta_M\{\varphi\} \geq (1- \frac{1}{2}|\langle \phi | 0\rangle|^2)^{-1} (\frac{1}{4(\Delta N)^2} + \frac{1}{2}|\langle \phi|0\rangle|^2),
\end{eqnarray}
the following uncertainty relation holds
\begin{eqnarray}
\Delta_M\{\varphi\} \cdot \Delta N \geq \frac{1}{2}.
\end{eqnarray}

\end{document}